\documentclass[12pt]{article}

\usepackage[T1]{fontenc}
\usepackage[active]{srcltx}
\usepackage{lmodern}
\usepackage{amsmath,amstext,amssymb,graphicx,graphics,color}
\usepackage{tikz}
\usepackage{pgfplots}
\usepackage{enumitem,mathtools,mathrsfs}
\usepackage{theorem}
\usepackage{cite}               
\usepackage{hyperref}           
\usepackage{bbm}                
\usepackage{fancybox}           
\usepackage{diagbox} 
\usepackage[section]{placeins}  
\usepackage[font=small,margin=1em]{caption}
\usepackage{subcaption}
\usepackage{mhchem}
\usepackage[margin=2.75cm]{geometry}
\theorembodyfont{\normalfont\slshape} 

\newtheorem{theorem}{Theorem}
\newtheorem{proposition}{Proposition}[section]
\newtheorem{corollary}[proposition]{Corollary}

\theoremstyle{break} 

\newenvironment{proof}%
{{\par\noindent \bf Proof. \nobreak}}%
{\nobreak \removelastskip \nobreak \hfill $\Box$ \medbreak}
{{\par\noindent \bf Proof \nobreak}}%
{\nobreak \removelastskip \nobreak \hfill $\Box$ \medbreak}
{{\par\noindent \bf Proof lemma. \nobreak}}%
{\nobreak \removelastskip \nobreak \bf End proof lemma. \medbreak}
\newenvironment{remark}{\par \medskip \noindent {\bf Remark. }\nobreak}{\par \medskip}
\usepackage{fancyhdr}
 \fancypagestyle{plain}{
  \fancyhead{}
  
}

\def\paragraph#1{{\bf #1\ }}
\newcommand{\expo}{\mathrm{e}}

\newcommand{\dd}{\mathrm{d}}

\newcommand{\cS}{\mathcal{S}}
\newcommand{\cH}{\mathcal{H}}

\usepackage[utf8]{inputenc}
\usepackage{unicode_character}

\title{Relocation without preference: A destination-agnostic Schelling-type metapopulation model}
\author{Fei Cao \footnotemark[1] \and Roberto Cortez \footnotemark[2]}

\begin{document}
\maketitle

\footnotetext[1]{Amherst College - Department of Mathematics, Amherst, MA 01002, USA}
\footnotetext[2]{Universidad Andrés Bello - Departamento de Matemáticas, Santiago, Chile}

\tableofcontents


\begin{abstract}
In this work, we propose and analyze a novel Schelling-type metapopulation model that examines how random relocations of families between neighborhoods can lead to segregation. The model consists of a large number of houses organized into $N$ neighborhoods with $L$ houses each, without any spatial structure. Houses can be occupied by either a blue or a red family, and families relocate---to an empty house selected uniformly at random---at a rate that depends only on the number of families of the other type within the same neighborhood. We study two mean-field regimes: the large $N$ limit with fixed $L$, and the large $L$ limit with fixed $N$. The associated mean-field systems of ODEs are derived, and their long-time behavior is investigated. As is often the case with Schelling-type models, we find a rich interplay between the model parameters and the social structure of the equilibrium distribution, which exhibits segregation in some parameter ranges. Our work demonstrates that segregation patterns can emerge even when the relocation mechanism is destination-agnostic.
\end{abstract}

\noindent {\bf Key words: Multi-agent systems, Mean-field, Metapopulation, Population dynamics, Schelling-type model, Segregation, Sociophysics}

\section{Introduction}\label{sec:1}

The field of sociophysics has emerged as a robust interdisciplinary framework that applies the rigorous tools of statistical mechanics and nonlinear dynamics to the study of human collective behavior \cite{castellano_statistical_2009,sen_sociophysics_2014,galam_gefen_shapir_1982}. At its core, sociophysics posits that social systems can be understood through the lens of interacting particle systems, where individuals/agents follow relatively simple behavioral rules that give rise to complex, sometimes counter-intuitive, macroscopic phenomena. Through the mathematical abstraction of social dynamics, we can isolate the governing principles (such as phase transitions, symmetry breaking, and self-organization) that underpin diverse collective phenomena, ranging from the dynamics of public opinion and wealth redistribution to the spatiotemporal emergence of urban residential patterns \cite{cao_bennati_2025,cao_iterative_2025,cao_uncovering_2022,cao_wealth_2026,naldi_mathematical_2010,yizhaq_mathematical_2004}.

Perhaps the most iconic contribution to this field is the residential segregation model introduced by Thomas Schelling \cite{schelling_1969_models,schelling_1971_dynamic,schelling_2006_micromotives}. Schelling's work demonstrated a profound and somewhat unsettling reality: high levels of macroscopic segregation can emerge from agents who possess only a mild preference for similar neighbors. In the classic Schelling framework, agents inhabit a lattice (or checkerboard) and evaluate their local ``happiness'' based on a threshold of similarity with their immediate neighbors. If an agent's neighborhood falls below this threshold, they become ``unhappy'' and move to the nearest available satisfactory location. The revolutionary insight here was the decoupling of individual intent from collective outcomes: even ``tolerant'' individuals who are comfortable being in a minority can inadvertently contribute to a total spatial separation of groups. In the language of physics, this represents a self-organizing process driven by local feedback loops.

While early iterations of Schelling-type models were confined to discrete lattices and local ``jumping'' rules, modern research has shifted toward \emph{metapopulation} models \cite{durrett_2014_exact,gargiulo_2017_emergent}. In these frameworks, the system is viewed as a collection of sub-populations (neighborhoods or patches) linked by migration. This shift allows for the application of mean-field theories and deterministic differential equations, providing a clearer path to rigorous analytical results.

A critical component of these models is the relocation mechanism. Traditional models often assume ``preference-based relocation'', where agents seek out specific destinations that maximize their utility \cite{shin_2014_theoretical}. However, the complexity of real-world social systems, ranging from economic constraints to simple information asymmetry, often leads to ``uniform'' or ``blind'' relocation. In such scenarios, the focus shifts from the choice of destination to the trigger for departure. When agents move without a specific preference for their destination, the spatial structure of the system is shaped not by where people want to be, but by the differential stability of the neighborhoods they leave behind. This leads to a fascinating interplay between local dynamics and global constraints, where segregation emerges as a ``frozen'' state of a system that has lost its diffusive mixing.

\subsection{Definition of the model}\label{subsec:1.1}

In this work, we consider a metapopulation version of Schelling's segregation model, which consists of $N \in \mathbb{N}_+$ neighborhoods or blocks, each of which has $L \in \mathbb{N}_+$ houses that can be either occupied by a family or empty. Families in our model are minimum (and indivisible) units and there are two types of families in total, which we call \emph{blue} and \emph{red}. To make our model analytically trackable (as least to certain extent) we disregard spatial (network) topology within the neighborhoods together with their physical location. As has been mentioned in a recent work \cite{durrett_2014_exact}, the aforementioned assumptions/simplifications are sometimes realistic from a modeling perspective, as one can think of neighborhoods as blocks in a metropolitan city where the sense of community extends beyond immediate neighbors to the broader neighborhood as a whole. Assume that there are $\rho_{-}\,N\,L$ blue families and $\rho_{+}\,N\,L$ red families, where $\rho_\pm \in (0,1)$ are fixed parameters such that $0 < \rho_{-}+\rho_+ < 1$, thus leaving $(1-\rho_{-}-\rho_+)\,N\,L$ empty houses in total. To describe our model in detail we shall fix some notations and terminologies first. We label the $N$ neighborhoods from $1$ to $N$ and denote $X^{[i],j}_t \in \{-1,0,1\}$ the state of the $j$-th house ($1\leq j\leq L$) in the $i$-th neighborhood ($1\leq i\leq N$) at time $t$, where $0$ means that the house is empty and $-1$ or $1$ represent that the house is occupied by a blue or red family, respectively. The dynamics of our Schelling-type model can be described according to the following rule: each family moves to some given empty house (possibly in a different neighborhood) at a random time generated by a Poisson clock (independent of everything else) with rate prescribed by
\begin{equation}\label{eq:descrition_of_rate}
\frac{1}{N\,L}\left[\beta + \lambda\cdot\textrm{Number of families of opposite type in the same neighborhood}\right],
\end{equation}
where $\beta \geq 0$ and $\lambda \geq 0$ are additional model parameters which we term the \emph{base jump rate} and the \emph{parameter of intolerance}, respectively. From a modeling point of view, the parameter $\beta$ attempts to capture the underlying psychological motive for families to move to (or explore) a new living environment (which leads to \emph{diffusion} from a mathematical perspective), and the parameter $\lambda$ quantifies the strength of interactions for families living within the same neighborhood (which generates \emph{nonlinearity} in its mean-field description). We remark here that the presence of the normalization factor $\frac{1}{N\,L}$ in \eqref{eq:descrition_of_rate} is necessary to ensure that each blue or red family relocates at a rate of order $1$. 
A more explicit formula for the rate described in \eqref{eq:descrition_of_rate} is possible as well, and it reads as
\begin{equation}\label{eq:formula_rate}
\frac{|X^{[i],j}|}{N\,L}\left[\beta + \lambda\,\sum\limits_{k=1}^L \frac{(1-X^{[i],j}\,X^{[i],k})|X^{[i],k}|}{2}\right].
\end{equation}
Indeed, when $X^{[i],j} = 0$ (meaning that the $j$-th house in the $i$-th neighborhood is vacant) the rate of relocation \eqref{eq:formula_rate} will be zero and we also have
\begin{equation*}
\frac{(1-X^{[i],j}\,X^{[i],k})|X^{[i],k}|}{2} =
\begin{cases}
0, &~\textrm{if $X^{[i],k} = 0$ or $X^{[i],k} = X^{[i],j}$},\\
1, &~\textrm{if $X^{[i],k} \neq 0$ and $X^{[i],k} \neq X^{[i],j}$},
\end{cases}
\end{equation*}
whenever $X^{[i],j} \neq 0$. Thus the two descriptions \eqref{eq:descrition_of_rate} and \eqref{eq:formula_rate} for the rate of relocation are actually equivalent. Figure \ref{fig:illustration_model} provides a brief illustration of our Schelling-type metapopulation model and highlights several of our primary results.
\begin{figure}[!t]
  \centering
  \includegraphics[width=.8\textwidth]{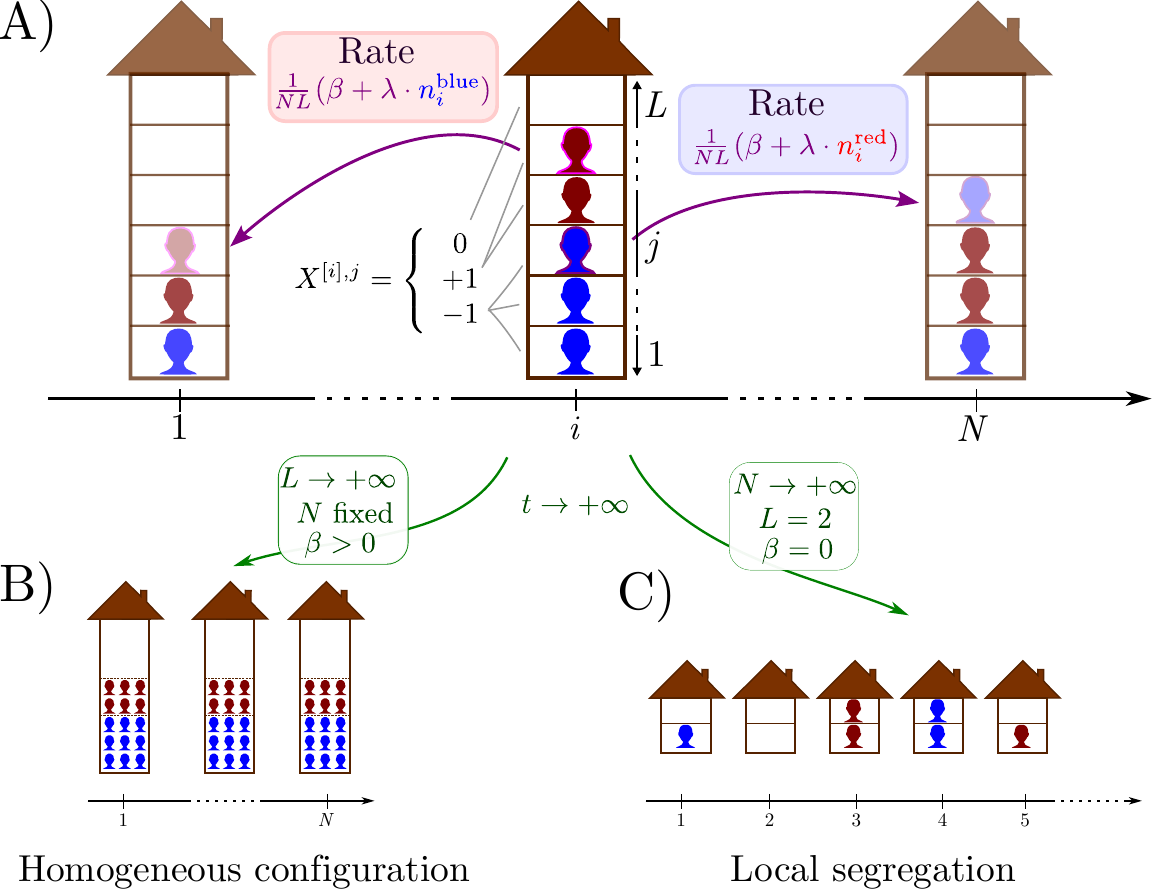}
  \caption{Illustration of the Schelling-type metapopulation model: families transition to vacant housing (potentially in different neighborhoods) at a rate defined by \eqref{eq:descrition_of_rate}. The number of blue and red families in the $i$-th neighborhood are denoted $n^{\text{blue}}_i$ and $n^{\text{red}}_i$, respectively. We analyze the system under two distinct asymptotic regimes: the limit $N \to \infty$ with $L$ fixed and the limit $L \to \infty$ with $N$ fixed. In Section \ref{sec:2}, we show that under the mean-field limit as $N \to \infty$ while keeping $L$ fixed (taking $L=2$ for the sake of simplicity), the system with $\beta = 0$ evolves toward \emph{local segregation} as $t \to \infty$. This is understood in the sense that, at equilibrium, blue and red families cannot co-reside in the same neighborhood. In Section \ref{sec:3}, we demonstrate that in the mean-field limit as $L \to \infty$ with $N$ fixed, which requires a suitable re-scaling of the rate in \eqref{eq:descrition_of_rate} by replacing $\lambda$ with $\lambda / L$, the system with $\beta > 0$ relaxes to a \emph{homogeneous configuration}, in the sense that the proportions of blue and red families are uniform across all neighborhoods.}
  \label{fig:illustration_model}
\end{figure}

We emphasize here that, even though our model setup draws important inspiration from the recent work \cite{durrett_2014_exact}, it differs from the typical Schelling-type dynamics in two crucial aspects:
\begin{enumerate}[label=(\roman*)]
\item Families (or agents) are not classified into ``happy'' or ``unhappy'' in our model, while a dominating majority of the literature on Schelling-type models introduce the notion of ``happiness'' to describe the status of the agent \cite{clark_2008_understanding,dall_2008_statistical}, measured in terms of certain utility function which depends on the number of agents of distinct type living in the neighborhood;
\item When a family moves, it will jump to an empty site/house uniformly at random according to our model description, whereas in most of the literature on Schelling-type models, the destination of a moving agent is biased toward (empty) sites which make the moving agent ``happier'' or at least less ``unhappy''. 
\end{enumerate}

\section{Mean-field limit as $N \to \infty$}\label{sec:2}
\setcounter{equation}{0}

Our model defines a stochastic agent-based dynamics involving a large number of (weakly) interacting/coupled families. To study its large-population asymptotic behavior, we resort to the framework of mean-field theories under suitable limiting procedures. Let $N^{b,r}_t$ be the total number of neighborhoods with $b$ blue families and $r$ red families at time $t$, where $(b,r) \in \mathcal{S} \coloneqq \{(b,r) \in \mathbb{N}^2 \mid b+r\leq L\}$; we say such a neighborhood is a \emph{$(b,r)$-neighborhood}. Since the spatial structure/topology of the set of $N$ neighborhoods is irrelevant to our discussion, the configuration/state of the system at any time $t\geq 0$ is fully characterized by $\{N^{b,r}_t\}_{(b,r)\in \mathcal{S}}$. Adopting a mean-field approach \cite{sznitman_topics_1991}, if we send $N \to \infty$ while keeping $L$ fixed, we expect that each $N^{b,r}_t/N$ will converge to a deterministic limit. Indeed, similar to the setup considered in \cite{durrett_2014_exact}, the framework provided in the work \cite{remenik_2009_limit} enables us to derive a mean-field system of ODEs for our model. Denote
\begin{equation}\label{eq:rate_movement}
\mathcal{R}(b_1,r_1;b_2,r_2) = \frac{1}{L}(L-b_2-r_2)\,\left[b_1\,(\beta + \lambda\,r_1) + r_1\,(\beta + \lambda\,b_1)\right]
\end{equation}
the normalized rate at which families in a given $(b_1,r_1)$-neighborhood move to a given $(b_2,r_2)$-neighborhood. Let $\mathcal{N}_i = (b_i,r_i)$ and $\mathcal{N}'_i = (b'_i,r'_i)$ for $i=1,2$. Then
\begin{equation}\label{eq:rate_transition}
\begin{aligned}
&\mathcal{J}(\mathcal{N}_1,\mathcal{N}_2;\mathcal{N}'_1,\mathcal{N}'_2) \\
&= \begin{cases}
\frac{b_1}{L}\,(L-b_2-r_2)\,(\beta + \lambda\,r_1), &~ \textrm{if $\mathcal{N}'_1 = (b_1-1,r_1)$, $\mathcal{N}'_2 = (b_1+1,r_1)$}; \\
\frac{r_1}{L}\,(L-b_2-r_2)\,(\beta + \lambda\,b_1), &~ \textrm{if $\mathcal{N}'_1 = (b_1,r_1-1)$, $\mathcal{N}'_2 = (b_1,r_1+1)$}; \\
0, &~ \textrm{otherwise}
\end{cases}
\end{aligned}
\end{equation}
yields the normalized rate at a which relocations from an $\mathcal{N}_1$-neighborhood to an $\mathcal{N}_2$-neighborhood transforms the pair $\mathcal{N}_1,\mathcal{N}_2$ into the pair $\mathcal{N}'_1,\mathcal{N}'_2$. Consequently, the mean-field ODE system can be recast into a similar form as in several previous works \cite{remenik_2009_limit,durrett_2014_exact}:

\begin{proposition}\label{prop:1}

For each $(b,r) \in \mathcal{S}$, the random fraction $N^{b,r}_t/N$ converges in probability, as $N \to \infty$, to the solution of the following coupled ODE system:
\begin{equation}\label{eq:mean_field_ODE}
	\begin{aligned}
		\frac{\dd}{\dd t} p_{b,r}(t) &=
		-p_{b,r}(t) \,\sum\limits_{\mathcal{N} \in \mathcal{S}} \left[\mathcal{R}(b,r;\mathcal{N})+\mathcal{R}(\mathcal{N};b,r)\right]\, p_{\mathcal{N}}(t) \\
		&~~+ \sum\limits_{\mathcal{N}_1,\mathcal{N}_2 \in \mathcal{S}} \left[\mathcal{J}(\mathcal{N}_1,\mathcal{N}_2;(b,r),\mathcal{N}') + \mathcal{J}(\mathcal{N}_1,\mathcal{N}_2;\mathcal{N}',(b,r))\right]\,p_{\mathcal{N}_1}(t) \, p_{\mathcal{N}_2}(t).
	\end{aligned}
\end{equation}

\end{proposition}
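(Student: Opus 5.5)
The plan is to cast the empirical neighborhood-type process $P^N_t \coloneqq (N^{b,r}_t/N)_{(b,r)\in\mathcal{S}}$ as a density-dependent Markov jump process on the finite simplex $\Delta_{\mathcal{S}}$, and then apply a Kurtz-type law of large numbers, in the spirit of the framework of \cite{remenik_2009_limit} used in \cite{durrett_2014_exact}. Because the state space of each neighborhood is the finite set $\mathcal{S}$, all we need is to identify the transitions and their rates. A relocation from a given $\mathcal{N}_1$-neighborhood to a given $\mathcal{N}_2$-neighborhood changes $P^N$ by $\frac{1}{N}\bigl(e_{\mathcal{N}'_1}+e_{\mathcal{N}'_2}-e_{\mathcal{N}_1}-e_{\mathcal{N}_2}\bigr)$. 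We have to read $\mathcal{N}'_2=(b_2+1,r_2)$ or $(b_2,r_2+1)$ here, which is the evident intent of \eqref{eq:rate_transition}.

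Summing \eqref{eq:formula_rate} over the families in the source neighborhood and the $L-b_2-r_2$ empty houses in the target, the rate of such a move is $\frac{1}{NL}(L-b_2-r_2)\,b_1(\beta+\lambda r_1)$ per ordered pair of neighborhoods for blue families, and similarly for red. There are $N^2 p_{\mathcal{N}_1}p_{\mathcal{N}_2}$ ordered pairs, up to an $O(N)$ correction from the diagonal $i=i'$ (intra-neighborhood moves, which do not change $P^N$ anyway). So the total rate of this transition type is $N\,\mathcal{J}(\mathcal{N}_1,\mathcal{N}_2;\mathcal{N}'_1,\mathcal{N}'_2)\,p_{\mathcal{N}_1}p_{\mathcal{N}_2}+O(1)$. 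This is exactly the density-dependent form $N\,\beta_\ell(P^N)$ with jumps $\ell/N$ and quadratic intensities $\beta_\ell$.

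Next I would compute the drift $F(p)=\sum_\ell \ell\,\beta_\ell(p)$ and check that its $(b,r)$ coordinate equals the right-hand side of \eqref{eq:mean_field_ODE}. The loss term comes from $(b,r)$ appearing as either source or target. Summing $\mathcal{J}$ over the outcomes gives $\mathcal{R}(b,r;\mathcal{N})+\mathcal{R}(\mathcal{N};b,r)$, using $\sum_{\mathcal{N}'_1,\mathcal{N}'_2}\mathcal{J}=\mathcal{R}$. The gain term comes from $(b,r)$ appearing as either outcome. One subtlety in the bookkeeping is the case $\mathcal{N}_1=\mathcal{N}_2$ or $\mathcal{N}'_1=\mathcal{N}'_2$, where a coefficient $2$ or a cancellation arises. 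I would check that the formula's symmetric sum handles this automatically, since the ordered-pair counting matches the two separate terms.

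With $F$ identified, I would apply Kurtz's theorem (Ethier--Kurtz, Thm.~11.2.1). The state space is compact, $F$ is a polynomial and hence Lipschitz on $\Delta_{\mathcal{S}}$, and jump sizes are $O(1/N)$ with total rate $O(N)$. Writing $P^N_t=P^N_0+\int_0^t F(P^N_s)\,ds+M^N_t+O(t/N)$, the martingale $M^N$ has quadratic variation $O(t/N)$, so Doob's inequality gives $\sup_{t\le T}|M^N_t|\to 0$ in probability. Gronwall then yields $\sup_{t\le T}|P^N_t-p(t)|\to 0$ in probability, provided $P^N_0\to p(0)$, which is the implicit assumption on the initial data. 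Global existence and uniqueness for the ODE follow from local Lipschitzness plus invariance of $\Delta_{\mathcal{S}}$: total mass is conserved because each jump vector sums to zero, and positivity holds because the loss term is proportional to $p_{b,r}$.

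The main obstacle I expect is the combinatorial verification that the drift matches \eqref{eq:mean_field_ODE} exactly, including the diagonal and self-pair cases and the $O(1/N)$ self-interaction correction. The probabilistic limit itself is standard once that bookkeeping is done.
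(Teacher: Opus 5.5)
Your proposal is correct and follows essentially the paper's route. The paper only identifies the pair rates $\mathcal{R}$ and $\mathcal{J}$ and defers the limit to the framework of \cite{remenik_2009_limit}, whereas you carry out the drift bookkeeping explicitly and, since $\mathcal{S}$ is finite for fixed $L$, close with Kurtz's density-dependent law of large numbers (martingale bound plus Gronwall, with the $O(1)$ same-type correction absorbed); you are also right that $\mathcal{N}'_2$ in \eqref{eq:rate_transition} should read $(b_2+1,r_2)$ (resp.\ $(b_2,r_2+1)$).
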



Notice that a summation over $\mathcal{N}'$ is unnecessary as the neighborhood configuration of $\mathcal{N}'$ is entirely determined by $(b,r)$, $\mathcal{N}_1$ and $\mathcal{N}_2$. The finite dimensional nonlinear (and deterministic) ODE system \eqref{eq:mean_field_ODE} is clearly of Boltzmann-type since one can interpret the first term (with a minus sign in front) as a consequence of relocations from a $(b,r)$-neighborhood to an $\mathcal{N}$-neighborhood or vice versa. Similarly, the second term takes into account migrations from an $\mathcal{N}_1$-neighborhood to an $\mathcal{N}_2$-neighborhood that may create a $(b,r)$-neighborhood at the departure or at the destination.

Consequently, in the second summation of \eqref{eq:mean_field_ODE}, the only non-zero terms involve neighborhoods with $b \pm 1$ blue families or $r \pm 1$ red families. We can thus write \eqref{eq:mean_field_ODE} more explicitly:
\begin{equation}
\label{eq:mean_field_ODE_explicit}
\begin{aligned}
	\frac{\dd}{\dd t} p_{b,r}(t)
	&= - p_{b,r}(t) \sum_{(b',r') \in \mathcal{S}} p_{b',r'}(t)
	\left[
	\mathcal{F}_{b,r}^{b', r'}
	+ \mathcal{G}_{b,r}^{b', r'}
	+ \mathcal{F}_{b',r'}^{b, r}
	+ \mathcal{G}_{b',r'}^{b, r}
	\right]
	\\
	& \quad + \sum_{(b',r') \in \mathcal{S}} p_{b',r'}(t)
	\left[
	p_{b+1,r}(t)\, \mathcal{F}_{b+1,r}^{b', r'}
	+p_{b,r+1}(t)\, \mathcal{G}_{b,r+1}^{b', r'}
	\right. \\
	& \qquad \qquad \qquad \qquad
	\left.
	+p_{b-1,r}(t)\, \mathcal{F}_{b',r'}^{b-1, r}
	+p_{b,r-1}(t)\, \mathcal{G}_{b',r'}^{b, r-1}
	\right],
\end{aligned}
\end{equation}
where $\mathcal{F}_{b,r}^{b', r'}$ and $\mathcal{G}_{b,r}^{b', r'}$ denote the normalized relocation rates of blue and red families, respectively, from a $(b,r)$-neighborhood to a $(b',r')$-neighborhood:
\begin{align*}
	\mathcal{F}_{b,r}^{b', r'}
	= \frac{b}{L}\,(\beta+\lambda r)\,(L-b'-r'),
	\qquad
	\mathcal{G}_{b,r}^{b', r'}
	= \frac{r}{L}\,(\beta+\lambda b)\,(L-b'-r'),
\end{align*}
with the convention that they are zero when either $(b,r)$ or $(b',r')$ does not belong to $\mathcal{S}$. In \eqref{eq:mean_field_ODE_explicit}, the meaning of each term is immediately apparent. For instance, the normalized rate at which $(b,r)$-neighborhoods are created at the destination as a consequence of blue families moving from a $(b',r')$-neighborhood is given by
\[p_{b',r'}(t)\, p_{b-1,r}(t)\, \mathcal{F}_{b',r'}^{b-1, r} = p_{b',r'}(t)\, p_{b-1,r}(t)\, \frac{b'}{L}\,(\beta + \lambda r')\, (L+1-b-r).\]

\subsection{Special case with $L=2$ houses per neighborhood}\label{subsec:2.1}

In order to demonstrate a more explicit version of the nonlinear ODE system \eqref{eq:mean_field_ODE_explicit}, we start with the special case where $L = 2$ (similar to the consideration illustrated in \cite{durrett_2014_exact}), meaning that each neighborhood consists of only $2$ houses in total.

In this setting, there are 6 types of neighborhoods, namely
\[
	\mathcal{S}
	= \{ (0,0) \, , \, (1,0) \, , \, (0,1) \, , \, (2,0) \, , \, (0,2) \, , \, (1,1) \}.
\]	
We summarize the relevant rates appearing in the limiting ODE system \eqref{eq:mean_field_ODE_explicit} in Table \ref{table:T1}.

\begin{table}[ht]
		\centering
		\begin{tabular}{cc}
			\begin{tabular}{c | c c c}
				$ (b,r) \, \big{\backslash} \, (b',r') $
				& $(0,0)$ & $(1,0)$ & $(0,1)$ \\
				\hline
				$(1,0)$ & $\beta$ & $\frac{\beta}{2}$ & $\frac{\beta}{2}$  \\
				$(0,1)$ & $0$ & $0$ & $0$  \\
				$(2,0)$ & $2\beta$ & $\beta$ & $\beta$  \\
				$(0,2)$ & $0$ & $0$ & $0$ \\
				$(1,1)$ & $\beta+\lambda$ & $\frac{\beta+\lambda}{2}$ & $\frac{\beta+\lambda}{2}$ \\
			\end{tabular}
			$~$ & $~$
			\begin{tabular}{c | c c c}
				$ (b,r) \, \big{\backslash} \, (b',r') $
				& $(0,0)$ & $(1,0)$ & $(0,1)$ \\
				\hline
				$(1,0)$ & $0$ & $0$ & $0$  \\
				$(0,1)$ & $\beta$ & $\frac{\beta}{2}$ & $\frac{\beta}{2}$ \\
				$(2,0)$ & $0$ & $0$ & $0$  \\
				$(0,2)$ & $2\beta$ & $\beta$ & $\beta$ \\
				$(1,1)$ & $\beta+\lambda$ & $\frac{\beta+\lambda}{2}$ & $\frac{\beta+\lambda}{2}$ \\
			\end{tabular}
		\end{tabular}
		\caption{Values of the rates $\mathcal{F}_{b,r}^{b',r'}$ (left) and $\mathcal{G}_{b,r}^{b',r'}$ (right) when $L = 2$. The row for $(0,0)$ and the columns for $(2,0)$, $(0,2)$ and $(1,1)$ are all zero.}
		\label{table:T1}
\end{table}

Consequently, after grouping and cancelling some terms, the nonlinear ODE system \eqref{eq:mean_field_ODE_explicit} boils down to:
\begin{equation}\label{eq:mean_field_L=2}
	\left\{
	\begin{aligned}
		p'_{0,0}
		&= -\left[2\,\beta\, p_{2,0}+2\,\beta\, p_{0,2}+2\,(\beta+\lambda)\,p_{1,1}\right]\,p_{0,0} + \tfrac{\beta}{2}\,p^2_{1,0} + \tfrac{\beta}{2}\,p^2_{0,1} +\beta \,p_{1,0}\,p_{0,1} \\
		p'_{1,0}
		&= -\left[\beta\,p_{1,0}+\beta\, p_{0,1}+\beta\, p_{0,2}+\tfrac{\beta+\lambda}{2}\,p_{1,1}\right]\,p_{1,0} \\
		&\quad {} + 4\,\beta\,p_{0,0}\,p_{2,0} + \beta\,p_{2,0}\,p_{0,1} + 2\,(\beta+\lambda)\,p_{1,1}\,p_{0,0}  + \tfrac{\beta+\lambda}{2}\,p_{0,1}\,p_{1,1} \\
		p'_{0,1}
		&= -\left[\beta\,p_{1,0}+\beta\, p_{0,1}+\beta\, p_{2,0}+\tfrac{\beta+\lambda}{2}\,p_{1,1}\right]\,p_{0,1} \\
		&\quad {} + 4\,\beta\,p_{0,0}\,p_{0,2} + \beta\,p_{0,2}\,p_{1,0} + 2\,(\beta+\lambda)\,p_{1,1}\,p_{0,0} + \tfrac{\beta+\lambda}{2}\,p_{1,0}\,p_{1,1} \\
		p'_{2,0}
		&= -\left[2\,\beta\,p_{0,0}+\beta\,p_{0,1}\right]\,p_{2,0} + \tfrac{\beta}{2}\,p^2_{1,0} + \tfrac{\beta+\lambda}{2}\,p_{1,0}\,p_{1,1} \\
		p'_{0,2}
		&= -\left[2\,\beta\,p_{0,0}+\beta\,p_{1,0}\right]\,p_{0,2} + \tfrac{\beta}{2}\,p^2_{0,1} + \tfrac{\beta+\lambda}{2}\,p_{0,1}\,p_{1,1} \\
		p'_{1,1}
		&= -\left[2\,(\beta+\lambda)\,p_{0,0}+\tfrac{\beta+\lambda}{2}\,p_{1,0}+\tfrac{\beta+\lambda}{2}\,p_{0,1}\right]\,p_{1,1} \\
		&\quad {} + \beta\,p_{1,0}\,p_{0,1} + \beta\,p_{2,0}\,p_{0,1} + \beta\,p_{0,2}\,p_{1,0}
	\end{aligned}
	\right.
\end{equation}
in which we have used primes to denote the derivative with respect to time $t$ in order to simplify the presentation. Now we can identify at least three (independent) conservation laws associated to the solution of the system \eqref{eq:mean_field_L=2}.

\begin{proposition}\label{prop:conserved_L=2}
Assume that ${\bf p} \coloneqq (p_{0,0},p_{1,0},p_{0,1},p_{2,0},p_{0,2},p_{1,1})^\intercal$ is a classical solution of the nonlinear ODE system \eqref{eq:mean_field_L=2} starting from an initial probability distribution ${\bf p}(t=0) = (p_{0,0}(0),p_{1,0}(0),p_{0,1}(0),p_{2,0}(0),p_{0,2}(0),p_{1,1}(0)) \in \mathbb{R}^6_+$ such that
\begin{equation}\label{eq:IC_L=2}
\begin{aligned}
&\frac 12\,\left(p_{1,1}(0)+p_{1,0}(0)+2\,p_{2,0}(0)\right) = \rho_{-}, \\
&\frac 12\,\left(p_{1,1}(0)+p_{0,1}(0)+2\,p_{0,2}(0)\right) = \rho_+, \\
&p_{0,0}(0) + p_{1,0}(0) + p_{0,1}(0) + p_{2,0}(0) + p_{0,2}(0) + p_{1,1}(0) = 1.
\end{aligned}
\end{equation}
Then for all $t\geq 0$ we have
\begin{equation}\label{eq:conservation_law}
\begin{aligned}
&\frac 12\,\left(p_{1,1}(t)+p_{1,0}(t)+2\,p_{2,0}(t)\right) = \rho_{-}, \\
&\frac 12\,\left(p_{1,1}(t)+p_{0,1}(0)+2\,p_{0,2}(t)\right) = \rho_+, \\
&p_{0,0}(t) + p_{1,0}(t) + p_{0,1}(t) + p_{2,0}(t) + p_{0,2}(t) + p_{1,1}(t) &= 1.
\end{aligned}
\end{equation}
\end{proposition}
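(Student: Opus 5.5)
The plan is to differentiate the three linear functionals $M(t) \coloneqq \sum p_{b,r}(t)$, $B(t) \coloneqq \tfrac12(p_{1,1}+p_{1,0}+2p_{2,0})$ and $R(t) \coloneqq \tfrac12(p_{1,1}+p_{0,1}+2p_{0,2})$ along solutions of \eqref{eq:mean_field_L=2}, and to show that each derivative vanishes identically as a quadratic polynomial in ${\bf p}$. Once $M'=B'=R'=0$ holds for all $t$, the initial conditions \eqref{eq:IC_L=2} give \eqref{eq:conservation_law}. (The second line of \eqref{eq:conservation_law} contains $p_{0,1}(0)$, which is evidently a typo for $p_{0,1}(t)$; I would prove the statement with $p_{0,1}(t)$.)

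\textbf{Structural route.} Rather than expanding everything by hand, I would first argue from the general form \eqref{eq:mean_field_ODE_explicit}. For any function $\phi(b,r)$,
\[
\frac{\dd}{\dd t}\sum_{(b,r)}\phi(b,r)\,p_{b,r}
=\sum_{(b,r),(b',r')} p_{b,r}\,p_{b',r'}\Big[\mathcal{F}_{b,r}^{b',r'}\,\Delta^{\mathcal F}\phi+\mathcal{G}_{b,r}^{b',r'}\,\Delta^{\mathcal G}\phi\Big],
\]
where
\[
\Delta^{\mathcal F}\phi=\phi(b-1,r)+\phi(b'+1,r')-\phi(b,r)-\phi(b',r'),
\]
and $\Delta^{\mathcal G}\phi$ is defined analogously with the increments in $r$. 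This identity comes from reindexing the gain terms: shift $b+1\to b$ in the first gain term, and $b-1\to b'$ in the third after relabeling.

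For the three functionals:
\begin{itemize}
\item For $\phi\equiv 1$, every bracket is $0$.
\item For $\phi(b,r)=b/2$, $\Delta^{\mathcal F}\phi=\tfrac12(-1+1)=0$ and $\Delta^{\mathcal G}\phi=0$.
\item The case $\phi=r/2$ is the same computation with the roles of $b$ and $r$ exchanged.
\end{itemize}
Here $\sum_{b,r}(b/2)\,p_{b,r}$ equals $B$ exactly when $L=2$, and likewise for $R$.

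One subtlety is the boundary convention: rates vanish when the source or target state leaves $\mathcal S$. This is automatic, because $\mathcal{F}_{b,r}^{b',r'}$ carries the factor $b$ (so it vanishes when $b-1<0$) and the factor $(L-b'-r')$ (so it vanishes when $b'+1+r'>L$). Hence no boundary terms are lost in the reindexing.

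\textbf{Direct check.} As a consistency check, I would also verify these identities directly on \eqref{eq:mean_field_L=2}. For instance, the derivative of $p_{1,1}+p_{1,0}+2p_{2,0}$ should cancel term by term: the $p_{1,0}p_{1,1}$ coefficients give $-\tfrac{\beta+\lambda}{2}\cdot 2+\tfrac{\beta+\lambda}{2}\cdot 2=0$, and so on.

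\textbf{Main obstacle.} The main difficulty is this bookkeeping, namely making sure the reduced system \eqref{eq:mean_field_L=2} (obtained after "grouping and cancelling") is faithful to \eqref{eq:mean_field_ODE_explicit}. Concretely, one should confirm that the cross terms such as $p_{0,0}p_{2,0}$, $p_{0,1}p_{2,0}$ and $p_{1,0}p_{0,1}$ cancel with the correct coefficients. If any coefficient of the displayed system appears inconsistent, I would rely on the structural argument above, which proves conservation for the system as derived from Proposition~\ref{prop:1}.
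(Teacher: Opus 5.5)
Your proposal is correct, and your main route differs from the paper's. The paper omits the proof as a ``straightforward computation'': one differentiates the three linear combinations along \eqref{eq:mean_field_L=2} and cancels monomial by monomial, which is exactly your ``direct check''.

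Your primary argument instead uses the weak (test-function) form of \eqref{eq:mean_field_ODE_explicit}. There, conservation of $\sum_{b,r}\phi(b,r)\,p_{b,r}$ reduces to $\Delta^{\mathcal F}\phi=\Delta^{\mathcal G}\phi=0$, i.e.\ to every elementary move conserving $\phi$. This explains why the laws hold, and it gives them for every $L$ at once, with $\phi=1$, $b/L$, $r/L$. Your handling of the boundary terms through the factors $b$ and $(L-b'-r')$ is right.

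The cost of this route is that the proposition concerns the displayed system. The structural argument therefore proves it only once one knows that \eqref{eq:mean_field_L=2} is exactly \eqref{eq:mean_field_ODE_explicit} at $L=2$. Your stated fallback, ``rely on the structural argument if a coefficient looks inconsistent'', would not prove the claim as stated.

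That worry is moot here. Enumerating the nontrivial moves reproduces all six displayed equations: for instance $(1,0)+(1,0)\to(0,0)+(2,0)$ at rate $\beta/2$, $(2,0)+(0,0)\to(1,0)+(1,0)$ at rate $2\beta$, and $(1,1)+(1,0)\to(0,1)+(2,0)$ at rate $(\beta+\lambda)/2$. Consequently the direct check also closes for all three quantities.

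You are also right that $p_{0,1}(0)$ in the second conservation law is a typo for $p_{0,1}(t)$.
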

The proof of Proposition \ref{prop:conserved_L=2} follows from a rather straightforward computation, hence it will be omitted. We emphasize here that the first two identities in \eqref{eq:conservation_law} imply the conservation of densities of blue and red families, respectively, while the last identity in \eqref{eq:conservation_law} means that the dynamical system \eqref{eq:mean_field_L=2} preserves the total probability mass. As an immediate consequence of the conservation laws \eqref{eq:conservation_law} we also deduce the following identity holding for all $t\geq 0$
\begin{equation}\label{eq:conserved_vacant_house}
\frac 12\,\left(2\,p_{0,0}(t)+p_{0,1}(0)+p_{1,0}(t)\right) = 1 - \rho_+ - \rho_{-},
\end{equation}
corresponding to conservation of the density of empty houses.

\subsubsection{Case $\beta=0$: Local segregation}

In the special case where $\beta = 0$ (that is, the effect of ``pure diffusion'' is turned off), the system \eqref{eq:mean_field_L=2} simplifies to
\begin{equation}\label{eq:mean_field_L=2,beta=0}
	\left\{\begin{aligned}
		p'_{0,0} &= -2\,\lambda\,p_{0,0}\,p_{1,1} \\
		p'_{1,0} &= \lambda\,\left[2\,p_{0,0}+\frac{p_{0,1}-p_{1,0}}{2}\right]\,p_{1,1} \\
		p'_{0,1} &= \lambda\,\left[2\,p_{0,0}+\frac{p_{1,0}-p_{0,1}}{2}\right]\,p_{1,1} \\
		p'_{2,0} &= \frac{\lambda}{2}\,p_{1,0}\,p_{1,1} \\
		p'_{0,2} &= \frac{\lambda}{2}\,p_{0,1}\,p_{1,1} \\
		p'_{1,1} &= -\lambda\,\left[2\,p_{0,0}+\frac{p_{1,0}+p_{0,1}}{2}\right]\,p_{1,1}.
	\end{aligned}\right.
\end{equation}
As we are dealing with the special case where $L = 2$, any neighborhood with both types of families must be a $(1,1)$-neighborhood. Therefore, we say that the mean-field Schelling-type dynamical system \eqref{eq:mean_field_L=2,beta=0} converges to a \emph{local segregation} state if $\lim_{t\to \infty} p_{1,1}(t) = 0$.
We can show the following result:

\begin{corollary}\label{coro:cv_to_local_segregation}
Under the settings of Proposition \ref{prop:conserved_L=2}, if $\beta = 0$ and $\lambda > 0$, then the solution ${\bf p}(t)$ of \eqref{eq:mean_field_L=2} converges to a local segregation state. More precisely, we have
\begin{equation}\label{eq:cv_of_p11}
p_{1,1}(t) \leq p_{1,1}(0)\,\expo^{-(1-\rho_+-\rho_{-})\lambda\,t}
\end{equation}
for all $t \geq 0$.
\end{corollary}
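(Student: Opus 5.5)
The plan is to work directly with the last equation of \eqref{eq:mean_field_L=2,beta=0}, which is linear in $p_{1,1}$:
\[
p'_{1,1}(t) = -\lambda\,c(t)\,p_{1,1}(t), \qquad c(t) \coloneqq 2\,p_{0,0}(t)+\tfrac{1}{2}\bigl(p_{1,0}(t)+p_{0,1}(t)\bigr).
\]
Integrating gives the representation
\[
p_{1,1}(t)=p_{1,1}(0)\exp\Bigl(-\lambda\int_0^t c(s)\,\dd s\Bigr).
\]
The claimed estimate \eqref{eq:cv_of_p11} then reduces to the pointwise lower bound $c(t)\ge 1-\rho_+-\rho_-$ for all $t\ge0$, provided the components of ${\bf p}(t)$ stay nonnegative.

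The first step is therefore to establish nonnegativity of the solution. Each equation in \eqref{eq:mean_field_L=2,beta=0} has the form $p_k' = -a_k(t)\,p_k + g_k(t)$, where $g_k\ge0$ whenever the other components are nonnegative. Specifically:
\begin{itemize}
\item $p_{0,0}$ and $p_{1,1}$ are exponentials times their initial data, hence nonnegative.
\item $p_{2,0}' = \tfrac{\lambda}{2}\,p_{1,0}\,p_{1,1}$ and $p_{0,2}' = \tfrac{\lambda}{2}\,p_{0,1}\,p_{1,1}$.
\item $p_{1,0}' = -\tfrac{\lambda}{2}\,p_{1,1}\,p_{1,0} + \lambda\bigl(2\,p_{0,0} + \tfrac{1}{2}\,p_{0,1}\bigr)\,p_{1,1}$, and symmetrically for $p_{0,1}$.
\end{itemize}
The standard quasi-positivity argument for the classical solution then shows that the nonnegative orthant $\mathbb{R}^6_+$ is invariant. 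This can be done either by considering the first exit time, or by using a variation-of-constants formula on a small perturbation.

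The key step is the lower bound on $c(t)$. By the vacant-house conservation \eqref{eq:conserved_vacant_house} (which should read with $p_{0,1}(t)$; it follows from Proposition \ref{prop:conserved_L=2} by subtracting the first two identities from the third),
\[
p_{0,0}(t)+\tfrac{1}{2}\bigl(p_{1,0}(t)+p_{0,1}(t)\bigr) = 1-\rho_+-\rho_-.
\]
Hence $c(t) = (1-\rho_+-\rho_-) + p_{0,0}(t) \ge 1-\rho_+-\rho_-$, using $p_{0,0}(t)\ge0$. Substituting into the exponential representation yields
\[
p_{1,1}(t)\le p_{1,1}(0)\,\expo^{-(1-\rho_+-\rho_-)\lambda t}.
\]
Since $\rho_++\rho_-<1$ and $\lambda>0$, the exponent is strictly negative, so $p_{1,1}(t)\to0$ and the solution converges to a local segregation state.

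The only genuinely delicate point is the positivity and invariance step, in particular making sure the conservation laws of Proposition \ref{prop:conserved_L=2} are applied with the correct time arguments (the typos there). Once positivity is in hand, the rest is a one-line Grönwall-type computation. If the full equilibrium convergence of the other components is also wanted, I would note that $p_{0,0}$ is nonincreasing and $p_{2,0}$, $p_{0,2}$ are nondecreasing and bounded, so they converge; the conservation laws then force the convergence of $p_{1,0}$ and $p_{0,1}$.
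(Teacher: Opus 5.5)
Your argument is correct and matches the paper's: both use the vacant-house conservation law to write the coefficient of $p_{1,1}$ as $(1-\rho_+-\rho_-)+p_{0,0}\ge 1-\rho_+-\rho_-$ and then conclude by Gr\"onwall, and your exact integrating-factor formula amounts to the same step. Your positivity check is a welcome addition that the paper leaves implicit, but it is not the delicate point you suggest: for the bound itself you only need $p_{0,0}\ge0$, which is immediate from $p_{0,0}(t)=p_{0,0}(0)\exp\bigl(-2\lambda\int_0^t p_{1,1}(s)\,\dd s\bigr)$, together with $p_{1,1}(0)\ge0$.
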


\begin{proof}
Thanks to the conservation of the density of empty houses \eqref{eq:conserved_vacant_house}, from \eqref{eq:mean_field_L=2,beta=0}, we deduce that
\begin{equation*}
p'_{1,1}
\leq -\frac{\lambda}{2}\,(2\,p_{0,0}+p_{0,1}+p_{1,0})\,p_{1,1}
= -\lambda\,(1-\rho_+-\rho_{-})\,p_{1,1},
\end{equation*}
from which one arrives at the announced bound \eqref{eq:cv_of_p11} via Gronwall's inequality.
\end{proof}

\begin{remark}
We remark here that the notion of local segregation introduced above makes intuitive sense, and it is a delicate task (if possible at all) to have a ``global'' notion of segregation, as our underlying agent-based model ignores the spatial/geographical interaction network between neighborhoods. Such simplification allows us to pass to a mean-field description without much difficulty, and we leave the systematic investigation of a spatial version/counterpart of our Schelling-type dynamics for future work.
\end{remark}

Although the significantly simplified dynamical system \eqref{eq:mean_field_L=2,beta=0} may look elementary, its equilibrium distribution ${\bf p}^* \coloneqq (p^*_{0,0},p^*_{1,0},p^*_{0,1},p^*_{2,0},p^*_{0,2},p^*_{1,1})^\intercal$ depends heavily on the initial datum ${\bf p}(0)$. In fact, any (time-independent) probability distribution ${\bf p}$ with $p_{1,1} = 0$ satisfying the set of compatibility conditions \eqref{eq:conservation_law} will be a candidate equilibrium distribution associated with the ODE system \eqref{eq:mean_field_L=2,beta=0}. On the other hand, it is not hard to argue that the equilibrium distribution ${\bf p}^*$ must satisfy $p^*_{1,1} = 0$. Therefore, at equilibrium, we have ${\bf p}^* \in \mathbb{R}^6_+$ together with
\begin{equation}\label{eq:equilibrium_linear_system_L=2,beta=0}
\left\{\begin{aligned}
&p^*_{1,0} + 2\,p^*_{2,0} = 2\,\rho_{-}, \\
&p^*_{0,1} + 2\,p^*_{0,2} = 2\,\rho_+, \\
&p^*_{0,0} + p^*_{1,0} + p^*_{0,1} + p^*_{2,0} + p^*_{0,2} = 1.
\end{aligned}\right.
\end{equation}
The linear system \eqref{eq:equilibrium_linear_system_L=2,beta=0} is obviously underdetermined, whence the distribution ${\bf p}^*$ to which the solution of \eqref{eq:mean_field_L=2,beta=0} converges will rely on the choices of initial datum ${\bf p}(0)$ and the parameters $\rho_\pm$. For instance, it is impossible to have $p^*_{0,0} = 0$ whenever $\rho_{-} + \rho_+ < 1/2$.

\subsubsection{Case $\lambda=0$: Convergence to the trinomial distribution}

On the other hand, if $\lambda = 0$ and $\beta > 0$ so that the interaction among families living in the same neighborhoods are turned off, the $6$-dimensional nonlinear ODE system \eqref{eq:mean_field_L=2} simplifies to
\begin{equation}\label{eq:mean_field_L=2,lambda=0}
\left\{\begin{aligned}
p'_{0,0} &= -2\,\beta\,p_{0,0}\,\left[p_{0,2}+p_{2,0}+p_{1,1}\right]+\frac{\beta}{2}\,\left[p^2_{1,0}+p^2_{0,1}\right]+\beta\,p_{1,0}\,p_{0,1} \\
p'_{1,0} &= -\beta\,p_{1,0}\,\left[1-p_{0,0}-p_{2,0}\right] + \beta\,\left[4\,p_{0,0}\,p_{2,0} + p_{0,1}\,p_{2,0} + 2\,p_{0,0}\,p_{1,1} + \frac{p_{1,0}+p_{0,1}}{2}\,p_{1,1}\right] \\
p'_{0,1} &= -\beta\,p_{0,1}\,\left[1-p_{0,0}-p_{0,2}\right] + \beta\,\left[4\,p_{0,0}\,p_{0,2} + p_{0,2}\,p_{1,0} + 2\,p_{0,0}\,p_{1,1} + \frac{p_{1,0}+p_{0,1}}{2}\,p_{1,1}\right] \\
p'_{2,0} &= -\beta\,p_{2,0}\,\left[2\,p_{0,0} + p_{0,1}\right] + \frac{\beta}{2}\,p^2_{1,0} + \frac{\beta}{2}\,p_{1,0}\,p_{1,1} \\
p'_{0,2} &= -\beta\,p_{0,2}\,\left[2\,p_{0,0} + p_{1,0}\right] + \frac{\beta}{2}\,p^2_{0,1} + \frac{\beta}{2}\,p_{0,1}\,p_{1,1} \\
p'_{1,1} &= -\beta\,p_{1,1}\,\left[2\,p_{0,0}+p_{0,1}+p_{1,0}\right] + \beta\,\left[p_{1,0}\,p_{0,1} + p_{2,0}\,p_{0,1} + p_{1,0}\,p_{0,2}+ \frac{p_{1,0}+p_{0,1}}{2}\,p_{1,1}\right].
\end{aligned}\right.
\end{equation}

It is straightforward to check that the distribution ${\bf p}^*$, defined by
\begin{equation}\label{eq:equili_L=2_lambda=0}
\begin{aligned}
	p^*_{0,0} &= (1-\rho_{-}-\rho_+)^2 \\
	p^*_{1,0} &= 2\,\rho_{-}\,(1-\rho_{-}-\rho_+) \\
	p^*_{0,1} &= 2\,\rho_+\,(1-\rho_{-}-\rho_+) \\
	p^*_{2,0} &= \rho^2_{-} \\
	p^*_{0,2} &= \rho^2_+ \\
	p^*_{1,1} &= 2\,\rho_{-}\,\rho_+
\end{aligned}
\end{equation}
is an equilibrium solution of the system \eqref{eq:mean_field_L=2,lambda=0}. In other words, ${\bf p}^*$ is a trinomial distribution given by
\begin{equation}\label{eq:equili_L=2_lambda=0_compact}
p^*_{b,r} = \binom{2}{b,r,2-b-r}\,\rho^b_{-}\,\rho^r_+\,(1-\rho_{-} - \rho_+)^{2-b-r}
\end{equation}
for each $(b,r) \in \cS$.~
Moreover, we can prove that the relative entropy from the solution ${\bf p}(t)$ to the trinomial equilibrium solution ${\bf p}^*$, defined by
\[
	\cH\left({\bf p}(t)\mid {\bf p}^*\right)
	\coloneqq \sum_{(b,r) \in \cS} p_{b,r}(t)\,\ln \frac{p_{b,r}(t)}{p^*_{b,r}},
\]
serves as a Lyapunov functional associated to the dynamical system \eqref{eq:mean_field_L=2,lambda=0}, which decreases monotonically with respect to time.

\begin{proposition}\label{prop:entropy_dissipation}
Under the settings of Proposition \ref{prop:conserved_L=2}, if $\lambda = 0$ and $\beta > 0$, then the solution ${\bf p}(t)$ of \eqref{eq:mean_field_L=2} satisfies
\begin{equation}\label{eq:dH/dt}
\begin{aligned}
&\frac{1}{\beta}\,\frac{\dd}{\dd t} \cH\left({\bf p}(t)\mid {\bf p}^*\right) \\
&= -2\,(p_{0,0}\,p_{2,0} - p^2_{1,0}/4)\,\ln\frac{4\,p_{0,0}\,p_{2,0}}{p^2_{1,0}} -2\,(p_{0,0}\,p_{0,2} - p^2_{0,1}/4)\,\ln\frac{4\,p_{0,0}\,p_{0,2}}{p^2_{0,1}}\\
& \qquad -2\,\left(p_{0,0}\,p_{1,1} - \frac{p_{0,1}\,p_{1,0}}{2}\right)\,\ln\frac{2\,p_{0,0}\,p_{1,1}}{p_{0,1}\,p_{1,0}} \\
& \qquad -\left(p_{0,1}\,p_{2,0} - \frac{p_{1,0}\,p_{1,1}}{2}\right)\,\ln\frac{2\,p_{0,1}\,p_{2,0}}{p_{1,0}\,p_{1,1}} -\left(p_{1,0}\,p_{0,2} - \frac{p_{0,1}\,p_{1,1}}{2}\right)\,\ln\frac{2\,p_{1,0}\,p_{0,2}}{p_{0,1}\,p_{1,1}} \\
& \leq 0
\end{aligned}
\end{equation}
for all $t\geq 0$, and the equality holds if and only if ${\bf p}$ coincides with ${\bf p}^*$. In particular, the solution ${\bf p}(t)$ converges to the trinomial distribution ${\bf p}^*$ as $t \to \infty$.
\end{proposition}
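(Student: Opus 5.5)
The plan is to rewrite the system \eqref{eq:mean_field_L=2,lambda=0} as a sum of elementary reversible ``reactions'', each a binary exchange of one family between two neighborhoods. With $\lambda=0$, every family jumps at rate $\beta$ to a uniformly chosen empty house. The relevant pair transitions, together with their rate functions, are the following.
\begin{itemize}
\item $(2,0)+(0,0)\rightleftharpoons(1,0)+(1,0)$, with forward flux $2\beta p_{2,0}p_{0,0}$ and backward flux $\tfrac{\beta}{2}p_{1,0}^2$.
\item The analogous red pair $(0,2)+(0,0)\rightleftharpoons(0,1)+(0,1)$.
\item $(1,1)+(0,0)\rightleftharpoons(1,0)+(0,1)$, using both the blue and the red move.
\item $(2,0)+(0,1)\rightleftharpoons(1,0)+(1,1)$ and its mirror $(0,2)+(1,0)\rightleftharpoons(0,1)+(1,1)$.
\end{itemize}
Moves between two singly occupied neighborhoods of the same type, e.g.\ $(1,0)+(1,0)\to(0,0)+(2,0)$ versus $(1,0)+(0,1)$ moves, should already be accounted for in this list. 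I will check by comparing with Table~\ref{table:T1} that each equation in \eqref{eq:mean_field_L=2,lambda=0} equals $\sum_k \nu_k(b,r)\,\beta\,(A_k-B_k)$. Here $\nu_k$ are the stoichiometric coefficients, $A_k,B_k$ are the forward and backward monomials, and the five pairs $(A_k,B_k)$ are precisely those appearing in \eqref{eq:dH/dt}, up to the factors $2$ and $1$.

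Next I differentiate $\cH$. Mass is conserved by Proposition~\ref{prop:conserved_L=2}, so $\frac{\dd}{\dd t}\cH=\sum p'_{b,r}\ln(p_{b,r}/p^*_{b,r})$. Inserting the reaction decomposition turns this into
\[
\sum_k \beta\,c_k\,(A_k-B_k)\,\Big[\ln\frac{\text{reactants}}{\text{products}}-\ln\frac{\text{reactants}^*}{\text{products}^*}\Big].
\]
The key observation is that the trinomial ${\bf p}^*$ of \eqref{eq:equili_L=2_lambda=0} satisfies detailed balance for every reaction. For example, $4p^*_{0,0}p^*_{2,0}=4(1-\rho_--\rho_+)^2\rho_-^2=(p^*_{1,0})^2$, and $2p^*_{0,1}p^*_{2,0}=4\rho_+(1-\rho_--\rho_+)\rho_-^2=p^*_{1,0}p^*_{1,1}$. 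Hence the equilibrium log-terms cancel. Each remaining term has the form $-(x-y)\ln(x/y)\le 0$ after the constants are absorbed, which yields exactly \eqref{eq:dH/dt}.

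For the equality case, each summand vanishes if and only if its detailed-balance relation holds. The relations $4p_{00}p_{20}=p_{10}^2$, $4p_{00}p_{02}=p_{01}^2$ and $2p_{00}p_{11}=p_{10}p_{01}$ say that $p_{b,r}=\binom{2}{b,r,2-b-r}x^by^rz^{2-b-r}$ for some $x,y,z\ge0$, at least when $p_{0,0}>0$. The conservation laws \eqref{eq:conservation_law} then force $z=1-\rho_--\rho_+$ and $(x,y)=(\rho_-,\rho_+)$, so ${\bf p}={\bf p}^*$. Boundary cases with zero entries must be excluded separately. Since ${\bf p}^*$ has all entries positive, and vacancy conservation \eqref{eq:conserved_vacant_house} prevents $p_{0,0},p_{1,0},p_{0,1}$ from all vanishing, a short case check should show that a boundary point cannot make all five terms zero unless the relations propagate positivity.

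Convergence then follows from a LaSalle argument. The trajectory lives in a compact simplex slice and $\cH$ is continuous and nonincreasing. Every $\omega$-limit point therefore lies in the set where $\frac{\dd}{\dd t}\cH=0$ intersected with the conservation constraints, and that set is $\{{\bf p}^*\}$. The main obstacle is making this rigorous at the boundary. The logarithms are singular when some $p_{b,r}=0$, so I would first show that all components become strictly positive for $t>0$, using $\beta>0$ and the positive source terms, e.g.\ $p'_{0,0}\ge \tfrac\beta2(p_{1,0}+p_{0,1})^2-Cp_{0,0}$. The dissipation functional is then continuous along the trajectory. The second delicate point is the algebraic bookkeeping that matches the ODE to the reaction sum with the exact coefficients $2$ and $1$.
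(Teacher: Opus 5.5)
Your proposal is correct and follows the paper's proof essentially step for step. The paper splits $\cH'$ into the same five reversible exchanges A--E, uses detailed balance of the trinomial ${\bf p}^*$ to cancel the equilibrium logarithms, and obtains five terms of the form $-(x-y)\ln(x/y)\le 0$. One small slip: your displayed sum over reactions is missing an overall minus sign, although the conclusion you draw from it is the right one. Your additional steps make rigorous what the paper leaves to the ``standard entropy method'': you recover the trinomial form from relations A--C and then fix it with the conservation laws for the equality case, and you use instantaneous positivity for $t>0$ to run LaSalle on the $\omega$-limit set despite the boundary singularities of the logarithms.
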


\begin{proof}
For the sake of notational simplicity and without loss of any generality, we work with the case where $\beta = 1$. The evolution of the relative entropy $\cH(t) \coloneqq \cH\left({\bf p}(t) \mid {\bf p}^*\right)$ is given by the sum of six terms as follows:
\begin{equation}\label{eq:dH}
\begin{aligned}
\cH'
&= p'_{0,0}\,(\ln p_{0,0} - \ln p^*_{0,0}) + p'_{1,1}\,(\ln p_{1,1} - \ln p^*_{1,1}) + p'_{1,0}\,(\ln p_{1,0} - \ln p^*_{1,0}) \\
&\quad {} + p'_{0,1}\,(\ln p_{0,1} - \ln p^*_{0,1}) + p'_{2,0}\,(\ln p_{2,0} - \ln p^*_{2,0}) + p'_{0,2}\,(\ln p_{0,2} - \ln p^*_{0,2}).
\end{aligned}
\end{equation}
Now we regroup the right-hand side in a specific manner and decompose the derivative $\cH'$ into a sum of five terms:
\begin{enumerate}[label=(\Alph*)]
\item Reaction A: $\ce{(1,0) + (1,0) <=> (0,0) + (2,0)}$, where we extract the $\left\{(0,0),(1,0),(2,0)\right\}$ interaction by grouping every term on the right-hand side of \eqref{eq:dH} that contains the product $p_{0,0}\,p_{2,0}$ or $p^2_{1,0}$:
    \begin{itemize}
    \item From $p'_{0,0}$: $-2\,p_{0,0}\,p_{2,0} + \frac 12\,p^2_{1,0}$;
    \item From $p'_{1,0}$: $4\,p_{0,0}\,p_{2,0} - p^2_{1,0}$;
    \item From $p'_{2,0}$: $-2\,p_{0,0}\,p_{2,0} + \frac 12\,p^2_{1,0}$;
    \end{itemize}
    Now we multiply these by their respective logarithms in the $\cH'$ sum to get
    \begin{align*}
    &\left(-2\,p_{0,0}\,p_{2,0} + \frac 12\,p^2_{1,0}\right)\left(\ln \frac{p_{0,0}}{p^*_{0,0}} + \ln \frac{p_{2,0}}{p^*_{2,0}} - 2\,\ln \frac{p_{1,0}}{p^*_{1,0}}\right) \\
    &\qquad = -2\,(p_{0,0}\,p_{2,0} - p^2_{1,0}/4)\,\ln\frac{4\,p_{0,0}\,p_{2,0}}{p^2_{1,0}} \leq 0.
    \end{align*}
\item Reaction B: $\ce{(0,1) + (0,1) <=> (0,0) + (0,2)}$, where we extract the $\left\{(0,0),(0,1),(0,2)\right\}$ interaction by grouping every term on the right-hand side of \eqref{eq:dH} that contains the product $p_{0,0}\,p_{0,2}$ or $p^2_{0,1}$:
    \begin{itemize}
    \item From $p'_{0,0}$: $-2\,p_{0,0}\,p_{0,2} + \frac 12\,p^2_{0,1}$;
    \item From $p'_{0,1}$: $4\,p_{0,0}\,p_{0,2} - p^2_{0,1}$;
    \item From $p'_{0,2}$: $-2\,p_{0,0}\,p_{0,2} + \frac 12\,p^2_{0,1}$;
    \end{itemize}
    Now we multiply these by their respective logarithms in the $\cH'$ sum to get
    \begin{align*}
    &\left(-2\,p_{0,0}\,p_{0,2} + \frac 12\,p^2_{0,1}\right)\left(\ln \frac{p_{0,0}}{p^*_{0,0}} + \ln \frac{p_{0,2}}{p^*_{0,2}} - 2\,\ln \frac{p_{0,1}}{p^*_{0,1}}\right) \\
    &\qquad = -2\,(p_{0,0}\,p_{0,2} - p^2_{0,1}/4)\,\ln\frac{4\,p_{0,0}\,p_{0,2}}{p^2_{0,1}} \leq 0.
    \end{align*}
\item Reaction C: $\ce{(0,0) + (1,1) <=> (1,0) + (0,1)}$, where we extract the $\left\{(0,0),(0,1),(1,0),(1,1)\right\}$ interaction by grouping every term on the right-hand side of \eqref{eq:dH} that contains the product $p_{0,0}\,p_{1,1}$ or $p_{0,1}\,p_{1,0}$:
    \begin{itemize}
    \item From $p'_{0,0}$: $-2\,p_{0,0}\,p_{1,1} + p_{1,0}\,p_{0,1}$;
    \item From $p'_{1,1}$: $-2\,p_{0,0}\,p_{1,1} + p_{1,0}\,p_{0,1}$;
    \item From $p'_{0,1}$: $-p_{1,0}\,p_{0,1} + 2\,p_{0,0}\,p_{1,1}$;
    \item From $p'_{1,0}$: $-p_{1,0}\,p_{0,1} + 2\,p_{0,0}\,p_{1,1}$;
    \end{itemize}
    Now we multiply these by their respective logarithms in the $\cH'$ sum to get
    \begin{align*}
    &\left(-2\,p_{0,0}\,p_{1,1} + p_{1,0}\,p_{0,1}\right)\left(\ln \frac{p_{0,0}}{p^*_{0,0}} + \ln \frac{p_{1,1}}{p^*_{1,1}} - \ln \frac{p_{0,1}}{p^*_{0,1}} - \ln \frac{p_{1,0}}{p^*_{1,0}}\right) \\
    &\qquad = -2\,\left(p_{0,0}\,p_{1,1} - \frac{p_{0,1}\,p_{1,0}}{2}\right)\,\ln\frac{2\,p_{0,0}\,p_{1,1}}{p_{1,0}\,p_{0,1}} \leq 0.
    \end{align*}
\item Reaction D: $\ce{(2,0) + (0,1) <=> (1,0) + (1,1)}$, where we extract the $\left\{(0,1),(1,0),(1,1),(2,0)\right\}$ interaction by grouping every term on the right-hand side of \eqref{eq:dH} that contains the product $p_{0,1}\,p_{2,0}$ or $p_{1,0}\,p_{1,1}$:
    \begin{itemize}
    \item From $p'_{2,0}$: $-p_{0,1}\,p_{2,0} + \frac 12\,p_{1,0}\,p_{1,1}$;
    \item From $p'_{1,1}$: $p_{0,1}\,p_{2,0} - \frac 12\, p_{1,0}\,p_{1,1}$;
    \item From $p'_{0,1}$: $-p_{0,1}\,p_{2,0} + \frac 12\,p_{1,0}\,p_{1,1}$;
    \item From $p'_{1,0}$: $p_{0,1}\,p_{2,0} - \frac 12\, p_{1,0}\,p_{1,1}$;
    \end{itemize}
    Now we multiply these by their respective logarithms in the $\cH'$ sum to get
    \begin{align*}
    &\left(-p_{0,1}\,p_{2,0} + \frac 12\, p_{1,0}\,p_{1,1}\right)\left(\ln \frac{p_{0,1}}{p^*_{0,1}} + \ln \frac{p_{2,0}}{p^*_{2,0}} - \ln \frac{p_{1,0}}{p^*_{1,0}} - \ln \frac{p_{1,1}}{p^*_{1,1}}\right) \\
    &\qquad = -\left(p_{0,1}\,p_{2,0} - \frac{p_{1,0}\,p_{1,1}}{2}\right)\,\ln\frac{2\,p_{0,1}\,p_{2,0}}{p_{1,0}\,p_{1,1}} \leq 0.
    \end{align*}
\item Reaction E: $\ce{(0,2) + (1,0) <=> (0,1) + (1,1)}$, where we extract the $\left\{(0,1),(1,0),(1,1),(0,2)\right\}$ interaction by grouping every term on the right-hand side of \eqref{eq:dH} that contains the product $p_{0,2}\,p_{1,0}$ or $p_{0,1}\,p_{1,1}$:
    \begin{itemize}
    \item From $p'_{0,2}$: $-p_{0,2}\,p_{1,0} + \frac 12\,p_{0,1}\,p_{1,1}$;
    \item From $p'_{1,1}$: $p_{0,2}\,p_{1,0} - \frac 12\, p_{0,1}\,p_{1,1}$;
    \item From $p'_{0,1}$: $p_{0,2}\,p_{1,0} - \frac 12\,p_{0,1}\,p_{1,1}$;
    \item From $p'_{1,0}$: $-p_{0,2}\,p_{1,0} + \frac 12\, p_{1,0}\,p_{1,1}$;
    \end{itemize}
    Now we multiply these by their respective logarithms in the $\cH'$ sum to get
    \begin{align*}
    &\left(-p_{0,2}\,p_{1,0} + \frac 12\, p_{0,1}\,p_{1,1}\right)\left(\ln \frac{p_{0,2}}{p^*_{0,2}} + \ln \frac{p_{1,0}}{p^*_{1,0}} - \ln \frac{p_{0,1}}{p^*_{0,1}} - \ln \frac{p_{1,1}}{p^*_{1,1}}\right) \\
    &\qquad = -\left(p_{0,2}\,p_{1,0} - \frac{p_{0,1}\,p_{1,1}}{2}\right)\,\ln\frac{2\,p_{0,2}\,p_{1,0}}{p_{0,1}\,p_{1,1}} \leq 0.
    \end{align*}
\end{enumerate}
Since each ``reaction'' contributes a non-positive term to $\cH'$, we finish the proof of \eqref{eq:dH/dt}. Moreover, $\cH' = 0$ only when the ratios inside the logarithms match the equilibrium ratios (which only happens at ${\bf p}^*$), thus $\cH' = 0$ if and only if ${\bf p} = {\bf p}^*$. Lastly, the desired (pointwise) convergence ${\bf p}(t) \to {\bf p}^*$ as $t\to \infty$ follows from a standard application of the entropy method in ordinary differential equations and dynamical systems (see for instance \cite{cao_derivation_2021,jungel_entropy_2016,perko_differential_2013}).
\end{proof}

\subsection{Numerical experiments}\label{subsec:2.2}


We perform a numerical investigation of the temporal evolution of the solution $\mathbf{p}(t)$ to the mean-field system \eqref{eq:mean_field_L=2} as it relaxes toward an equilibrium distribution. Simulation results are presented for three distinct parameter regimes: $\beta = 1$ and $\lambda = 0$ (Figure \ref{fig:F1}-left); $\beta = 0$ and $\lambda = 1$ (Figure \ref{fig:F2}-left); and $\beta = \lambda = 1$ (Figure \ref{fig:F2}-right). The system is initialized with $\mathbf{p}(0) = (0.05, 0.1, 0.2, 0.3, 0.15, 0.2)^\intercal$, so that $\rho_+ = 0.35$ and $\rho_{-} = 0.45$. To solve the ODE system \eqref{eq:mean_field_L=2} numerically, we employ a standard fourth-order Runge-Kutta scheme with a constant time step of $\Delta t = 0.01$.

\begin{figure}[!t]
  \begin{subfigure}{0.47\textwidth}
    \centering
    \includegraphics[width=\textwidth]{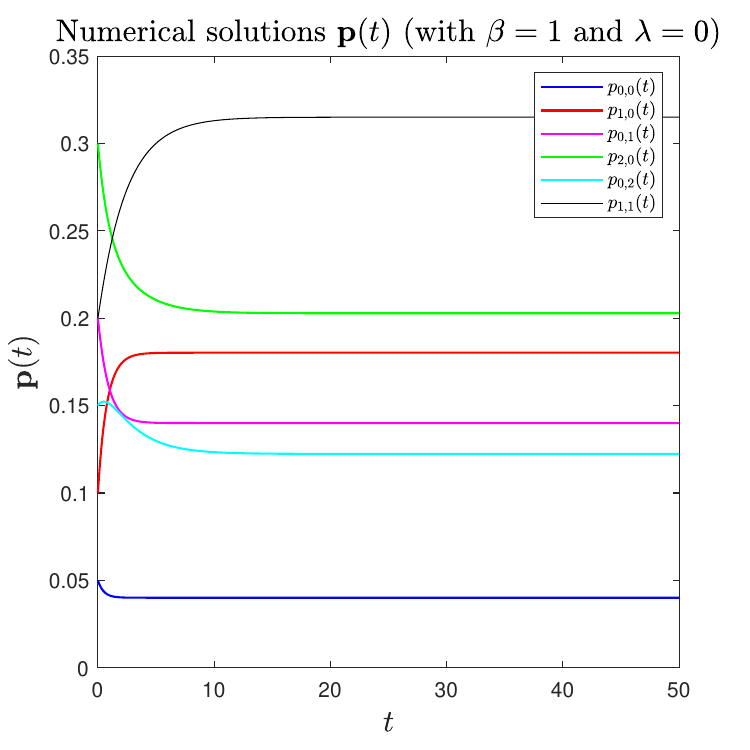}
  \end{subfigure}
  \hspace{0.1in}
  \begin{subfigure}{0.47\textwidth}
    \centering
    \includegraphics[width=\textwidth]{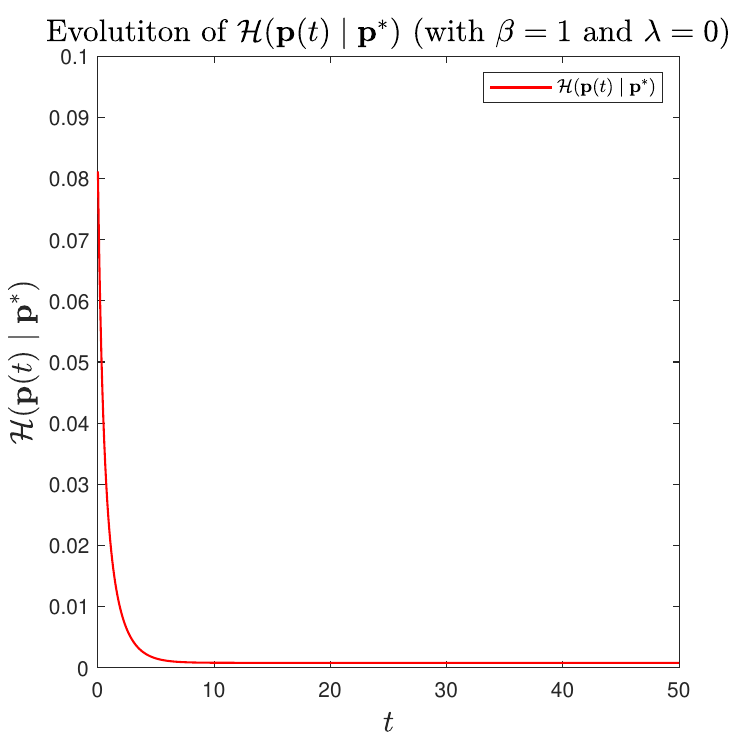}
  \end{subfigure}
  \caption{{\bf Left}: Evolution of the solution ${\bf p}(t)$ to the mean-field ODE system \eqref{eq:mean_field_L=2} with $\beta = 1$ and $\lambda = 0$. {\bf Right}: Evolution and dissipation of the relative entropy $\cH\left({\bf p}(t)\mid {\bf p}^*\right)$.}
  \label{fig:F1}  
\end{figure}

\begin{figure}[!t]
  \begin{subfigure}{0.47\textwidth}
    \centering
    \includegraphics[width=\textwidth]{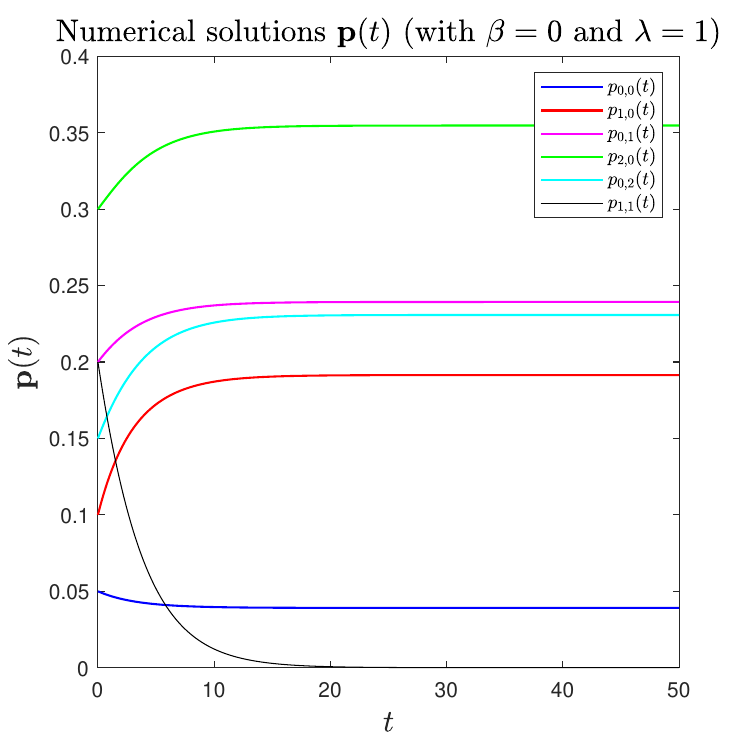}
  \end{subfigure}
  \hspace{0.1in}
  \begin{subfigure}{0.47\textwidth}
    \centering
    \includegraphics[width=\textwidth]{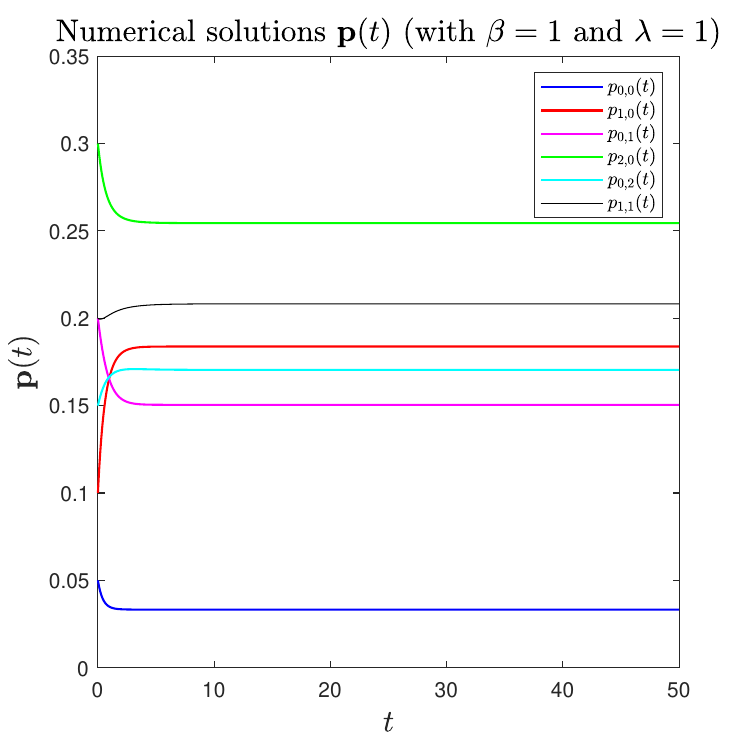}
  \end{subfigure}
  \caption{{\bf Left}: Evolution of the solution ${\bf p}(t)$ to the mean-field ODE system \eqref{eq:mean_field_L=2} with $\beta = 0$ and $\lambda = 1$. {\bf Right}: Evolution of the solution ${\bf p}(t)$ to the mean-field ODE system \eqref{eq:mean_field_L=2} with $\beta  = \lambda = 1$.}
  \label{fig:F2}  
\end{figure}

We see that in the absence of interaction-driven preferences (i.e., when $\lambda = 0$ and $\beta > 0$), the trajectories of the solution ${\bf p}(t)$ relax toward the trinomial distribution ${\bf p}^* = (0.04, 0.18, 0.14, 0.2025, 0.1225, 0.315)^\intercal$ prescribed by \eqref{eq:equili_L=2_lambda=0_compact}, which represents a perfectly mixed population. We also plot in Figure \ref{fig:F1}-right the evolution of the relative entropy $\cH\left({\bf p}(t)\mid {\bf p}^*\right)$ for the case $\beta = 1$ and $\lambda > 0$. As shown, the entropy decreases monotonically over time, providing numerical validation for the dissipation result established in Proposition \ref{prop:entropy_dissipation}. On the other hand, when diffusive mixing is suppressed (i.e., when $\beta = 0$ and $\lambda > 0$), we observe the emergence of (local) segregation, in the sense that $p_{1,1}(t) \to 0$ as $t \to \infty$, as illustrated in Figure \ref{fig:F2}-left.

\section{Mean-field limit as $L \to \infty$}\label{sec:3}
\setcounter{equation}{0}

In this section, we focus on an alternative limiting procedure: we let $L \to \infty$ while keeping $N$ frozen. Intuitively, each neighborhood $i \in \{1,\ldots,N\}$ can be viewed as a city or country inhabited by infinitely many families. In order to obtain a meaningful scaling limit as $L \to \infty$ with fixed $N$, the rate of moving prescribed in \eqref{eq:descrition_of_rate} or equivalently \eqref{eq:formula_rate} has to be renormalized appropriately. More precisely, we replace the jump rate  per family and empty house \eqref{eq:descrition_of_rate} with the following expression:
\begin{equation}\label{eq:descrition_of_rate_normalized}
\frac{1}{N\,L}\left[\beta + \lambda\cdot\textrm{Fraction of families of opposite type in the same neighborhood}\right].
\end{equation}
In other words, the rate \eqref{eq:formula_rate} is modified to
\begin{equation}\label{eq:formula_rate_normalized}
\frac{|X^{[i],j}|}{N\,L}\left[\beta + \frac{\lambda}{L}\,\sum\limits_{k=1}^L \frac{(1-X^{[i],j}\,X^{[i],k})|X^{[i],k}|}{2}\right].
\end{equation}
In the mean-field regime where $L \to \infty$ and $N$ is fixed, we denote $f_{i,+}(t)$ and $f_{i,-}(t)$ the fraction of red and blue families (respectively) living in the $i$-th neighborhood at time $t$, and call $f_{i,0} \coloneqq 1 - f_{i,+} - f_{i,-}$ the fraction of empty houses in the $i$-th neighborhood. Then, by following the mean-field framework established in \cite{cao_derivation_2021, cao_fractal_2024, cao_fractal_2026, cortez_quantitative_2016, cortez_uniform_2016, cortez_fontbona_2018}, a standard argument implies that the set of fractions $\left\{f_{i,\pm}\right\}_{1\leq i\leq N}$ satisfy the following coupled system of nonlinear ODEs:
\begin{equation}\label{eq:ODE_system}
\frac{\dd}{\dd t} f_{i,\pm} = f_{i,0}\,\sum_{j\neq i} (\beta + \lambda\,f_{j,\mp})\,f_{j,\pm} - f_{i,\pm}\,(\beta + \lambda\,f_{i,\mp})\,\sum_{j\neq i} f_{j,0}.
\end{equation}

\begin{remark}
We emphasize that the mean-field system \eqref{eq:ODE_system} is significantly more tractable than the system \eqref{eq:mean_field_ODE} discussed in the previous section. Indeed, the dimensionality of \eqref{eq:ODE_system} scales linearly with $N$, while the dimensionality of \eqref{eq:mean_field_ODE} suffers from combinatorial growth as $L$ increases.
\end{remark}

Next, we summarize some fundamental properties of the solutions to the coupled system \eqref{eq:ODE_system}, whose proof will be skipped as it involves only elementary computations.
\begin{proposition}\label{prop:conserved}
Assume that $\left\{f_{i,\pm}(t)\right\}_{1\leq i\leq N}$ is a classical solution of the nonlinear ODE system \eqref{eq:ODE_system} starting from a initial configuration $\left\{f_{i,\pm}(0)\right\}_{1\leq i\leq N}$ such that
\begin{equation}\label{eq:IC}
\frac{1}{N}\,\sum_{i=1}^N f_{i,\pm}(0) = \rho_{\pm}
\end{equation}
Then for all $t\geq 0$ it holds that
\begin{equation}\label{eq:conservation_law_ODE}
\frac{1}{N}\,\sum_{i=1}^N f_{i,\pm}(t) = \rho_{\pm}.
\end{equation}
Moreover, the homogeneous configuration $\{f^*_{i,\pm}\}_{1\leq i\leq N}$ defined by $f^*_{i,\pm} = \rho_\pm$ for all $1\leq i\leq N$, is an equilibrium solution to the system \eqref{eq:ODE_system}.
\end{proposition}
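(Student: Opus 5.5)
The plan is to prove both parts of Proposition \ref{prop:conserved} by direct computation from \eqref{eq:ODE_system}. The conservation law should follow from an antisymmetry (exchange) structure of the right-hand side.

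For the conservation law, write $a_{j,\pm} \coloneqq (\beta + \lambda\,f_{j,\mp})\,f_{j,\pm}$. This is the outflow intensity of $\pm$-families from neighborhood $j$. Then \eqref{eq:ODE_system} reads
\[
\frac{\dd}{\dd t} f_{i,\pm} = \sum_{j\neq i}\left( f_{i,0}\,a_{j,\pm} - a_{i,\pm}\,f_{j,0}\right).
\]
Summing over $i$, the total is $\sum_{i\neq j} f_{i,0}\,a_{j,\pm} - \sum_{i\neq j} a_{i,\pm}\,f_{j,0}$. Relabeling $i\leftrightarrow j$ in the second double sum shows the two sums coincide. Hence $\frac{\dd}{\dd t}\sum_i f_{i,\pm} = 0$, and \eqref{eq:conservation_law_ODE} follows by integrating in time and using \eqref{eq:IC}.

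For the equilibrium claim, substitute $f_{i,\pm}=\rho_\pm$ for all $i$, so that $f_{i,0}=1-\rho_+-\rho_-=:\rho_0$. Then $a_{j,\pm}=(\beta+\lambda\rho_\mp)\rho_\pm$ is independent of $j$. Each summand $f_{i,0}a_{j,\pm}-a_{i,\pm}f_{j,0}$ equals $\rho_0(\beta+\lambda\rho_\mp)\rho_\pm - (\beta+\lambda\rho_\mp)\rho_\pm\rho_0 = 0$. So the right-hand side vanishes identically, and the constant configuration is a solution of \eqref{eq:ODE_system}. It is also consistent with \eqref{eq:IC}.

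No step is a real obstacle. The only care needed is to keep the index bookkeeping in the double sum symmetric, in particular that the excluded diagonal $i=j$ is the same set under the swap. One may also note that $f_{i,0}$ is not an independent unknown but is determined by $f_{i,\pm}$; this plays no role in the computation.
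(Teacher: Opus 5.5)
Your proof is correct. The paper skips this proof as an ``elementary computation'', and your argument is exactly that computation: you write the right-hand side in the exchange form $\sum_{j\neq i}(f_{i,0}\,a_{j,\pm}-a_{i,\pm}\,f_{j,0})$ so that the $i\leftrightarrow j$ relabeling kills the total derivative, and at the homogeneous state every summand vanishes.
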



\subsection{Case $\lambda = 0$: Convergence to the homogeneous equilibrium}\label{subsec:3.1}

In the simplest case where $\lambda = 0$, if we denote $\rho_0 \coloneqq 1 - \rho_+ - \rho_{-} \in (0,1)$, then the nonlinear dynamical system \eqref{eq:ODE_system} reduces to
\begin{equation}\label{eq:ODE_lambda=0}
\begin{aligned}
f'_{i,\pm}
&= \beta\,f_{i,0}\,\sum_{j\neq i} f_{j,\pm} - \beta\,f_{i,\pm}\,\sum_{j\neq i} f_{j,0} \\
&= \beta\left[N\,\rho_\pm\,f_{i,0} - f_{i,0}\,f_{i,\pm} - N\,\rho_0\,f_{i,\pm} + f_{i,0}\,f_{i,\pm}\right] \\
&= N\,\beta\,\left(\rho_\pm\,f_{i,0} - \rho_0\,f_{i,\pm}\right),
\end{aligned}
\end{equation}
in which we have invoked the conservation laws \eqref{eq:conservation_law_ODE}. This is a linear system which is amenable to explicit solutions, and we have the following convergence guarantee:

\begin{theorem}\label{thm:exponential_conv}
Under the setting of Proposition \ref{prop:conserved}, if $\lambda = 0$ and $\beta > 0$, then for each $1\leq i\leq N$, we have $f_{i,\pm}(t) \to f^*_{i,\pm}$ as $t\to \infty$ exponentially with rate $N\,\beta\,\rho_0$.
\end{theorem}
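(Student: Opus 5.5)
The plan is to exploit the fact that \eqref{eq:ODE_lambda=0} is a linear system. Once the conservation laws \eqref{eq:conservation_law_ODE} from Proposition \ref{prop:conserved} are invoked, the system in fact decouples across neighborhoods. For each fixed $i$, I introduce the deviations
\[
u_{i,\pm} \coloneqq f_{i,\pm} - \rho_\pm, \qquad u_{i,0} \coloneqq f_{i,0} - \rho_0 = -u_{i,+} - u_{i,-},
\]
where the last identity uses $f_{i,0} = 1 - f_{i,+} - f_{i,-}$ and $\rho_0 = 1-\rho_+-\rho_-$. Substituting these into \eqref{eq:ODE_lambda=0} makes the constant terms cancel, since $\rho_\pm\,\rho_0 - \rho_0\,\rho_\pm = 0$. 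This leaves the linear $2\times 2$ system
\[
u'_{i,\pm} = N\beta\left(\rho_\pm\,u_{i,0} - \rho_0\,u_{i,\pm}\right) = -N\beta\left(\rho_\pm\,(u_{i,+}+u_{i,-}) + \rho_0\,u_{i,\pm}\right).
\]

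Next I diagonalize this system. Adding the two equations gives a closed equation for the sum $s_i \coloneqq u_{i,+}+u_{i,-} = -u_{i,0}$, namely $s_i' = -N\beta(\rho_++\rho_-+\rho_0)\,s_i = -N\beta\, s_i$. Hence $s_i(t) = s_i(0)\,\expo^{-N\beta t}$. The weighted difference $w_i \coloneqq \rho_-\,u_{i,+} - \rho_+\,u_{i,-}$ also satisfies a closed equation: the $s_i$-terms cancel, and we get $w_i' = -N\beta\rho_0\, w_i$, so $w_i(t) = w_i(0)\,\expo^{-N\beta\rho_0 t}$. Equivalently, the coefficient matrix $-N\beta(\rho_0 I + (\rho_+,\rho_-)^\intercal(1,1))$ has eigenvalues $-N\beta$ and $-N\beta\rho_0$.

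Finally, since $\rho_++\rho_->0$, the map $(u_{i,+},u_{i,-})\mapsto(s_i,w_i)$ is invertible. Explicitly,
\[
u_{i,+} = \frac{\rho_+ s_i + w_i}{\rho_++\rho_-}, \qquad u_{i,-} = \frac{\rho_- s_i - w_i}{\rho_++\rho_-}.
\]
Both deviations are therefore combinations of $\expo^{-N\beta t}$ and $\expo^{-N\beta\rho_0 t}$. Since $\rho_0<1$, the slower mode dominates, and we obtain $|f_{i,\pm}(t)-\rho_\pm| \le C_i\,\expo^{-N\beta\rho_0 t}$ with $C_i$ depending only on the initial data and $\rho_\pm$. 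This is the claimed exponential convergence to $f^*_{i,\pm}=\rho_\pm$ with rate $N\beta\rho_0$.

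There is no real obstacle in this argument. The only point needing care is the justification of the reduction \eqref{eq:ODE_lambda=0} itself: one must check that the sums over $j\neq i$ are correctly rewritten via \eqref{eq:conservation_law_ODE}. This includes the conservation of empty houses, $\sum_j f_{j,0} = N\rho_0$, which follows from the two conservation laws and the identity $f_{j,0}=1-f_{j,+}-f_{j,-}$.
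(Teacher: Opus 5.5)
Your proof is correct and follows essentially the same route as the paper. Both arguments invoke the conservation laws to reduce to the decoupled linear system \eqref{eq:ODE_lambda=0}, solve the sum coordinate (your $s_i$, the paper's $h_i-\rho_+-\rho_-$) at rate $N\beta$, and isolate a slow mode at rate $N\beta\rho_0$. The only difference is that your weighted difference $w_i=\tfrac12\bigl((\rho_++\rho_-)\,g_i-(\rho_+-\rho_-)\,h_i\bigr)$ is an exact eigen-coordinate, so it diagonalizes the system outright, whereas the paper uses $g_i=f_{i,+}-f_{i,-}$ and must solve an inhomogeneous linear equation after substituting $h_i(t)$.
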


\begin{proof}
For each $i \in \{1,\ldots, N\}$, we introduce $g_i = f_{i,+} - f_{i,-}$ and $h_i = f_{i,+} + f_{i,-}$. From \eqref{eq:ODE_lambda=0}, we obtain
\begin{equation*}
\left\{\begin{aligned}
&g'_i = N\,\beta\,\left((1-h_i)\,(\rho_+ - \rho_{-}) - g_i\,\rho_0\right), \\
&h'_i = N\,\beta\,\left((1-h_i)\,(\rho_+ + \rho_{-}) - h_i\,\rho_0\right) = N\,\beta\,\left(\rho_+ + \rho_{-} - h_i\right).
\end{aligned}\right.
\end{equation*}
Solving explicitly the ODE satisfied by $h_i$ leads us to
\[
h_i(t) = \rho_+ + \rho_{-} + \left[h_i(0) - (\rho_+ + \rho_{-})\right]\expo^{-N \beta t}.
\]
Inserting the expression for $h_i(t)$ into the ODE satisfied by $g_i$ and solving the resulting linear first-order equation, we obtain
\begin{align*}
g_i(t) &= \rho_+ - \rho_{-} + \left(g_i(0)-(\rho_+ - \rho_{-}) - \frac{(\rho_+ - \rho_{-})(h_i(0)-(\rho_+ + \rho_{-}))}{1-\rho_0}\right)\expo^{-N \beta \rho_0\, t} \\
&\quad + \frac{(\rho_+ - \rho_{-})(h_i(0)-(\rho_+ + \rho_{-}))}{1-\rho_0}\,\expo^{-N \beta\,t}.
\end{align*}
Finally, as $f_{i,+} = (h_i + g_i)/2$ and $f_{i,-} = (h_i - g_i)/2$, the advertised convergence guarantee follows.
\end{proof}

\subsection{Case $\beta = 0$: Existence of multiple equilibria}\label{subsec:3.2}

In the vanishing diffusion limit where $\beta = 0$, the nonlinear system \eqref{eq:ODE_system} reduces to
\begin{equation}\label{eq:ODE_beta=0}
	\begin{aligned}
		f'_{i,\pm} &= \lambda\,f_{i,0}\,\sum_{j\neq i} f_{j,+}\,f_{j,-} - \lambda\,f_{i,+}\,f_{i,-}\,\sum_{j\neq i} f_{j,0} \\
		&= \lambda\,f_{i,0}\,\sum_{j\neq i} f_{j,+}\,f_{j,-} - \lambda\,f_{i,+}\,f_{i,-}\,(N\,\rho_0 - f_{i,0}) \\
		&= \lambda\,\left(f_{i,0}\,\sum_{j=1}^N f_{j,+}\,f_{j,-} - N\,\rho_0\,f_{i,+}\,f_{i,-}\right).
	\end{aligned}
\end{equation}
In this case, the model admits other, non-homogeneous equilibrium solutions, beyond the homogeneous configuration $\{f^*_{i,\pm}\}$:


\begin{proposition}\label{prop:non-uniqueness}
Under the setting of Proposition \ref{prop:conserved} with $\beta = 0$ and $\lambda > 0$, there exist non-homogeneous equilibria for the nonlinear ODE system \eqref{eq:ODE_system}.
\end{proposition}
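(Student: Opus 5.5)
We will exhibit the non-homogeneous equilibria explicitly, working directly with the reduced form \eqref{eq:ODE_beta=0}. Write $S \coloneqq \sum_{j=1}^N f_{j,+}\,f_{j,-}$. A configuration is an equilibrium if and only if
\[
f_{i,0}\,S = N\,\rho_0\,f_{i,+}\,f_{i,-} \qquad \text{for every } 1\leq i\leq N .
\]
The simplest family of solutions has $S=0$, i.e. every neighborhood is segregated: $f_{i,+}\,f_{i,-}=0$ for all $i$. In that case both sides vanish identically and every such configuration is stationary.

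The concrete construction uses this segregated family. We choose any nonnegative numbers $f_{i,\pm}$ with $f_{i,+}\,f_{i,-}=0$ and $f_{i,+}+f_{i,-}\leq 1$, subject only to the constraints $\frac1N\sum_i f_{i,\pm}=\rho_\pm$ from \eqref{eq:IC}. One explicit choice is to fix $k\in\{1,\ldots,N-1\}$ and set $f_{i,-}=N\rho_-/k$, $f_{i,+}=0$ for $i\leq k$, and $f_{i,+}=N\rho_+/(N-k)$, $f_{i,-}=0$ for $i>k$. This choice is admissible provided $N\rho_-\leq k$ and $N\rho_+\leq N-k$.

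The only real point to check is that such a $k$ exists. Since $\rho_-+\rho_+<1$, the interval $[N\rho_-,\,N-N\rho_+]$ has length $N(1-\rho_--\rho_+)>0$, but for small $N$ it may contain no integer. To avoid this issue, we drop the requirement of constant values within each group. We place all blue families in neighborhood $1$ up to capacity, so $f_{1,-}=\min(1,N\rho_-)$, and overflow into the next neighborhoods, filling greedily. We then do the same for red families starting from neighborhood $N$ and moving backwards. This greedy filling produces a segregated configuration whenever the total occupied mass $N(\rho_-+\rho_+)<N$ can be fit without a blue and a red block sharing a neighborhood. That can fail only when they meet in a single neighborhood. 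In that case a fallback is to allow one mixed neighborhood.

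A cleaner route, which I would actually adopt, is to split each species over neighborhoods with \emph{equal} fractions $f_{i,-}=a$ on a set $B$ and $f_{i,+}=c$ on the complementary set $R$. Here $|B|=k$ and we need $ka=N\rho_-$ and $(N-k)c=N\rho_+$ with $a,c\leq1$. This reduces to $N\rho_-\leq k\leq N(1-\rho_+)$. When no integer $k$ lies in that range (only possible for $N$ small), we instead use non-equal fractions: take $k=\lceil N\rho_-\rceil$, and check that $N\rho_+\leq N-k$ still holds, or else put the leftover blue mass into one red neighborhood, making it mixed. If a mixed neighborhood is needed, $S>0$, and then the equilibrium condition must be solved genuinely. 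This is the main obstacle. We would handle it with the ansatz "$m$ identical neighborhoods of type $x$, the rest of type $y$": substitute into the condition above and solve the resulting two algebraic equations by an intermediate value argument.

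For the generic case $N\geq 2$ with $\lceil N\rho_-\rceil\leq N(1-\rho_+)$, the segregated construction suffices. It is clearly non-homogeneous, since $\rho_\pm>0$ while some $f_{i,\pm}=0$.
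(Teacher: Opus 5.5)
Your fully segregated construction is correct, but it only covers the case where some integer $k$ lies in $[N\rho_-,\,N(1-\rho_+)]$, and the complementary case does occur. Two examples are $N=2$, $\rho_-=0.3$, $\rho_+=0.6$ (the red mass $1.2$ needs both neighborhoods) and $N=3$, $\rho_\pm=0.4$. This is not a small-$N$ artifact: it only requires $N\rho_0<1$, which can happen for any $N$ when vacancies are scarce. In that case no fully segregated configuration exists at all. Your fallback ``take $k=\lceil N\rho_-\rceil$ and check $N\rho_+\le N-k$'' adds nothing, because that inequality is equivalent to the existence of such an integer. For the remaining branch you explicitly leave the equilibrium condition unsolved (``this is the main obstacle'') and offer only an unexecuted ansatz and intermediate-value sketch. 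As written, you have not proved existence for all admissible $(\rho_+,\rho_-)$.

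The missing observation is that nothing needs to be solved. In your own greedy filling, when the blue and red blocks meet in a single neighborhood $i_0$, every other neighborhood is completely full, so $f_{i,0}=0$ and $f_{i,+}f_{i,-}=0$ for $i\ne i_0$. All vacancies then sit in $i_0$, i.e.\ $f_{i_0,0}=N\rho_0$, and $i_0$ is the only contributor to $S$, i.e.\ $S=f_{i_0,+}f_{i_0,-}$. Hence the condition $f_{i_0,0}\,S=N\rho_0\,f_{i_0,+}f_{i_0,-}$ holds identically, and the configuration is non-homogeneous as soon as $N\ge 2$. This is exactly the mechanism behind the paper's case (i), where every neighborhood except $i_0$ is filled with the majority type.

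With this observation added, your two-case argument is complete: if an integer $k$ exists, use a fully segregated configuration; otherwise, use one mixed neighborhood with all others full. It is in fact more careful than the paper's case (ii). That configuration sets $\tilde f_{i_0,-}=N\rho_-$ and so silently needs $N\rho_-\le 1$, which fails for $N=3$, $\rho_\pm=0.4$, where case (i) does not apply either.
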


\begin{proof}
Indeed, we will produce multiple \emph{segregated} equilibria. We split the construction into the following scenarios depending on the relation among various model parameters:
\begin{enumerate}[label=(\roman*)]
\item If $\max\{\rho_+, \rho_{-}\} \geq \frac{N-1}{N}$, we assume without loss of generality that $\rho_{-} = \max\{\rho_+, \rho_{-}\}$. For each fixed $i_0 \in \{1,\ldots,N\}$, we define the configuration $\{\bar{f}_{i,\pm}\}_{1\leq i\leq N}$ by setting
\[
\bar{f}_{i_0,+} = N\,\rho_+,
\qquad
\bar{f}_{i_0,-} = N\,\rho_{-} - (N-1),
\]
and $\bar{f}_{i,+} = 0$, $\bar{f}_{i,-} = 1$ for all $i\neq i_0$. It is elementary to verify that these segregated configurations are indeed equilibrium solutions of the system \eqref{eq:ODE_beta=0}.

\item If $\max\{\rho_+, \rho_{-}\} < \frac{N-1}{N}$, we assume without loss of generality that $\rho_+ = \max\{\rho_+, \rho_{-}\}$. For each fixed $i_0 \in \{1,\ldots,N\}$, we define the configuration $\{\tilde{f}_{i,\pm}\}_{1\leq i\leq N}$ by setting
\[
\tilde{f}_{i_0,+} = 0,
\qquad
\tilde{f}_{i_0,-} = N\,\rho_{-},
\]
and $\tilde{f}_{i,+} = \frac{N}{N-1}\,\rho_+$, $\tilde{f}_{i,-} = 0$ for all $i\neq i_0$. It is elementary to verify that these segregated configurations are also equilibrium solutions of the system \eqref{eq:ODE_beta=0}.
\end{enumerate}
Putting together the discussions above allows us to conclude the proof.
\end{proof}

The previous result underscores the model's relevance: it demonstrates that the system admits non-homogeneous equilibria, effectively representing the phenomenon of spatial segregation in the absence of diffusive mixing. In addition, it is natural to speculate that there are uncountably many equilibria when $\beta = 0$. Indeed,  from \eqref{eq:ODE_beta=0} we see that $\frac{\dd}{\dd t} (f_{i,+} - f_{i,-}) = 0$, whence $f_{i,+}(t) - f_{i,-}(t) = f_{i,+}(0) - f_{i,-}(0)$ for all $t\geq 0$. That is, each neighborhood $i$ preserves its initial ``population bias'' $f_{i,+}(0) - f_{i,-}(0)$ as a local invariant. It is therefore natural to expect the system \eqref{eq:ODE_beta=0} to exhibit strong path dependency, where the specific initialization of these ``population bias'' determines which of the uncountably many equilibria the system ultimately converges to. In fact, the next result characterizes all possible equilibria:

\begin{theorem}
	\label{thm:rigid_mixing}
	Under the setting of Proposition \ref{prop:conserved} with $\beta = 0$ and $\lambda > 0$, any equilibrium configuration $\hat{f} = \{\hat{f}_{i,\pm}\}$ satisfies the following dichotomy:
	\begin{enumerate}[label=(\roman*)]
		\item either $\hat{f}$ is \textbf{fully segregated}, that is,
		\[
		\hat{f}_{i,+} = 0
		\quad \text{or} \quad
		\hat{f}_{i,-} = 0
		\quad \text{for all $1\leq i \leq N$},
		\]
		
		\item or $\hat{f}$ coincides with the \textbf{rigid mixing} configuration associated with the collection $\Delta_i := \frac{1}{2}(\hat{f}_{i,+} - \hat{f}_{i,+})$. Specifically, $\hat{f}$ is given by
		\begin{equation}
			\label{eq:rigid_mixing}
			\hat{f}_{i,\pm}
			= \sqrt{\alpha^2 + \alpha + \Delta_i^2} - \alpha \pm \Delta_i,
			\quad \text{for all $1\leq i \leq N$},
		\end{equation}
		where $\alpha>0$ is the unique solution to the algebraic equation
		\[
		\sum_{i=1}^N\sqrt{\alpha^2 + \alpha + \Delta_i^2}
		= N\alpha + \tfrac{1}{2} N (\rho_+ + \rho_-).
		\]
	\end{enumerate}
\end{theorem}

\begin{proof}
	From \eqref{eq:ODE_beta=0} with $0$ on the left-hand side, we see that $\hat{f}_{i,\pm}$ must satisfy the following algebraic relation:
	\begin{equation}
		\label{eq:equilibrium_algebraic_condition}
		\hat{f}_{i,0}\,\sum_{j=1}^N \hat{f}_{j,+}\,\hat{f}_{j,-}
		= \hat{f}_{i,+}\,\hat{f}_{i,-}\,N\,\rho_0,
		\quad \forall 1 \leq i \leq N.
	\end{equation}
	Assume $\hat{f}$ is not fully segregated, thus $\sum_j \hat{f}_{j,+ } \hat{f}_{j,-} > 0$. Let $m_i = \frac{1}{2} (\hat{f}_{i,+} + \hat{f}_{i,-})$, thus $\hat{f}_{i,\pm} = m_i \pm \Delta_i$. For a neighborhood $i$ such that $\hat{f}_{i,0} = 0$, \eqref{eq:equilibrium_algebraic_condition} implies that either $\hat{f}_{i,+} = 1$ or $\hat{f}_{i,-} = 1$, then $\Delta_i = 1/2$ and \eqref{eq:rigid_mixing} is satisfied for any $\alpha$. On the other hand, when $\hat{f}_{i,0} > 0$, from \eqref{eq:equilibrium_algebraic_condition}, we see that
	\[
	\frac{(m_i+\Delta_i)(m_i-\Delta_i)}{1-2 m_i}
	= \alpha
	\coloneqq \frac{1}{N \rho_0} \sum_{j=1}^N \hat{f}_{j,+} \hat{f}_{j,-}.
	\]
	Solving for $m_i$ gives
	\[
	m_i = \sqrt{\alpha^2 + \alpha + \Delta_i^2} -\alpha
	\]
	which yields the desired formula \eqref{eq:rigid_mixing}. It remains to characterize $\alpha$. Clearly $\sum_i m_i = \frac{1}{2}N (\rho_+ + \rho_-)$, thus $\alpha$ must satisfy
	\[
	0
	= -N\alpha - \tfrac{1}{2} N (\rho_+ + \rho_-) + \sum_{i=1}^N\sqrt{\alpha^2 + \alpha + \Delta_i^2}
	=: Q(\alpha).
	\]
	It is easy to check that $Q(0) < 0$, $\lim_{\alpha\to\infty} Q(\alpha) = \frac{1}{2} N \rho_0 > 0$ and that $Q$ is increasing. Thus, there exist a unique $\alpha>0$ such that $Q(\alpha)=0$. This concludes the proof.
\end{proof}

\begin{remark}
	The name ``rigid mixing'' for the configuration given by \eqref{eq:rigid_mixing} refers to the fact that this distribution attempts to mix the population as much as possible, under the constraint that every neighborhood has a prescribed population bias.
\end{remark}

\subsection{Case $\beta > 0$, $\rho_{+}=\rho_{-}$: Convergence to a unique equilibrium}\label{subsec:3.3}

As established in the preceding subsection, the uniqueness of equilibrium solutions is lost in the vanishing diffusion limit where $\beta = 0$. However, uniqueness of equilibrium can be restored for any $\beta > 0$, subject to the additional constraint that $\rho_+ = \rho_{-}$.

\begin{proposition}\label{prop:uniqueness_equal_rho}
Under the setting of Proposition \ref{prop:conserved}, if $\beta > 0$ and $\rho_+ = \rho_{-} = \rho \in (0, 1/2)$, then the homogeneous equilibrium $\{f^*_{i,\pm}\}$ is the unique equilibrium solution of the nonlinear system \eqref{eq:ODE_system}.
\end{proposition}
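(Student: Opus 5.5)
The plan is to read off the equilibrium equations of \eqref{eq:ODE_system} in a form where the $\lambda$-nonlinearity cancels in the difference of the $+$ and $-$ equations. This forces every neighborhood to be balanced, and then to conclude with a monotonicity argument. Throughout, let $\{\hat f_{i,\pm}\}$ be any equilibrium compatible with Proposition \ref{prop:conserved}, so $\sum_i \hat f_{i,\pm} = N\rho$ and $\sum_i \hat f_{i,0} = N\rho_0$ with $\rho_0 = 1-2\rho>0$.

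\textbf{Step 1: rewrite the equilibrium equations.} Set $a_{j,\pm} \coloneqq (\beta+\lambda \hat f_{j,\mp})\,\hat f_{j,\pm}$ and $A_\pm \coloneqq \sum_{j=1}^N a_{j,\pm}$. Setting the right-hand side of \eqref{eq:ODE_system} to zero gives $\hat f_{i,0}(A_\pm - a_{i,\pm}) = a_{i,\pm}(N\rho_0 - \hat f_{i,0})$. The terms $\hat f_{i,0}\,a_{i,\pm}$ cancel, exactly as in \eqref{eq:ODE_beta=0}, leaving
\begin{equation*}
\hat f_{i,0}\,A_\pm = N\rho_0\, a_{i,\pm}, \qquad 1\le i\le N.
\end{equation*}

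\textbf{Step 2: each neighborhood is balanced.} The key observation is that $a_{i,+}-a_{i,-} = \beta(\hat f_{i,+}-\hat f_{i,-})$, because the cross terms $\lambda \hat f_{i,+}\hat f_{i,-}$ cancel. Subtracting the two equations of Step 1 therefore gives
\[
\beta\,(\hat f_{i,+}-\hat f_{i,-}) = c\,\hat f_{i,0}, \qquad c \coloneqq \frac{A_+-A_-}{N\rho_0},
\]
where $c$ does not depend on $i$. Summing over $i$, the left side equals $\beta N(\rho_+-\rho_-) = 0$, while the right side equals $c\,N\rho_0$. Hence $c=0$, and since $\beta>0$ we get $\hat f_{i,+} = \hat f_{i,-} \eqqcolon m_i$ for every $i$. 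This is precisely where both hypotheses $\beta>0$ and $\rho_+=\rho_-$ are used.

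\textbf{Step 3: all neighborhoods are identical.} With $\hat f_{i,0} = 1-2m_i$, the equation of Step 1 becomes
\[
(1-2m_i)\,A = N\rho_0\,(\beta+\lambda m_i)\,m_i, \qquad A \coloneqq A_+ = A_-.
\]
First I exclude degenerate cases. If $A=0$, then $(\beta+\lambda m_i)m_i=0$ forces $m_i=0$ for all $i$, contradicting $\sum_i m_i = N\rho>0$. If some $m_i = 1/2$, the left side vanishes while the right side is positive, which is impossible. Hence every $m_i\in[0,1/2)$, and
\[
\phi(m_i) = \frac{A}{N\rho_0}, \qquad \phi(m) \coloneqq \frac{(\beta+\lambda m)\,m}{1-2m}.
\]
The function $\phi$ is strictly increasing on $[0,1/2)$: its numerator is increasing and nonnegative, and its denominator is positive and decreasing. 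So all the $m_i$ are equal, and conservation gives $m_i=\rho$, i.e. $\hat f_{i,\pm} = f^*_{i,\pm}$.

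I expect no serious obstacle. The only delicate point is spotting the cancellation in Step 2, which makes the $\pm$ difference linear, together with the boundary cases ($A=0$ and $\hat f_{i,0}=0$) in Step 3, which are handled by the positivity of $\beta$ and $\rho$.
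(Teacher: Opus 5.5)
Your proof is correct and follows the paper's argument. The paper likewise shows that $f_{i,+}-f_{i,-}$ satisfies a $\lambda$-free linear relation, which forces $\hat f_{i,+}=\hat f_{i,-}$ when $\beta>0$ and $\rho_+=\rho_-$, and it then concludes via injectivity of $R(x)=x(\beta+\lambda x)/(1-2x)$ together with conservation. Your explicit exclusion of the boundary case $\hat f_{i,0}=0$ (i.e.\ $m_i=1/2$) is a small improvement, since the paper simply asserts $a_i\in(0,1/2)$ before dividing by $1-2a_i$.
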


\begin{proof}
It suffices to show that whenever $\{F_{i,\pm}\}$ is another equilibrium solution, it must coincide with $\{f^*_{i,\pm}\}$. To start with, we claim that $F_{i,+} = F_{i,-}$ for each $1\leq i\leq N$. Indeed, if we denote $g_i = f_{i,+} - f_{i,-}$, we deduce from \eqref{eq:ODE_system} that
\begin{equation*}
\begin{aligned}
g'_i &= \beta\,f_{i,0}\,\sum_{j\neq i} g_j - \beta\,g_i\,\sum_{j\neq i} f_{j,0} \\
&= \beta\,f_{i,0}\,\sum_{j\neq i} g_j - \beta\,g_i\,N\,\rho_0 + \beta\,g_i\,f_{i,0} \\
&= N\,\beta\,\left(f_{i,0}\,(\rho_+ - \rho_{-}) - g_i\,\rho_0\right).
\end{aligned}
\end{equation*}
Under our additional assumption that $\rho_+ = \rho_{-}$, we obtain $g'_i = -N\,\beta\,\rho_0\,g_i$, whence $\lim_{t\to \infty} (f_{i,+}(t) - f_{i,-}(t)) = 0$. Consequently, any possible equilibrium configuration $\left\{F_{i,\pm}\right\}$ must satisfy the compatibility condition that $F_{i,+} - F_{i,-} = 0$ for all $1\leq i\leq N$. We denote $a_i = F_{i,+} = F_{i,-} \in (0,1/2)$ and introduce $f(x) = x\,(\beta + \lambda\,x)$ as well as $g(x) = 1-2\,x$, defined for $x \in (0,1/2)$. The stationarity of $\left\{F_{i,\pm}\right\}$ implies that $g(a_i)\,\sum_{j\neq i} f(a_j) = f(a_i)\,\sum_{j\neq i} g(a_j)$ for all $1\leq i\leq N$, from which we deduce that
\begin{equation}\label{eq:ratio}
\frac{f(a_i)}{g(a_i)} = \frac{\sum_{j=1}^N f(a_j)}{\sum_{j=1}^N g(a_j)}.
\end{equation}
The ratio on the right-hand side of the identity \eqref{eq:ratio} is clearly a constant independent of $i \in \{1,\ldots, N\}$. Let this constant be $K$. Thus for all $i \in \{1,\ldots, N\}$, we must have $R(a_i) = K$, where $R(x) = \frac{f(x)}{g(x)} = \frac{x\,(\beta + \lambda\,x)}{1-2\,x}$. To conclude that $a_i = \rho_+ = \rho_{-}$ for all $1\leq i\leq N$ (due to conservation law $\sum_{i=1}^N a_i = N\,\rho_+ = N\,\rho_{-}$), it suffices to notice the injectivity of the map $x \mapsto R(x)$ when $x \in (0,1/2)$.
\end{proof}

We now establish a convergence guarantee for the solution $\{f_{i,\pm}(t)\}$ of the nonlinear ODE system \eqref{eq:ODE_system} toward the homogeneous equilibrium $\{f^*_{i,\pm}\}$, assuming $\beta > 0$ and $\rho_+ = \rho_{-}$.

\begin{theorem}\label{thm:conv_homogeneous}
Under the setting of Proposition \ref{prop:conserved}, if $\beta > 0$ and $\rho_+ = \rho_{-} = \rho \in (0, 1/2)$, then it holds that $f_{i,\pm}(t) \xrightarrow{t\to \infty} f^*_{i,\pm}$ for each $1\leq i\leq N$.
\end{theorem}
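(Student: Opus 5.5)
The plan is to decouple the dynamics into a ``bias'' part and a ``total occupancy'' part, in the spirit of the proofs of Theorem~\ref{thm:exponential_conv} and Proposition~\ref{prop:uniqueness_equal_rho}. Set $g_i = f_{i,+}-f_{i,-}$ and $h_i = f_{i,+}+f_{i,-}$, so that $f_{i,0}=1-h_i$. The computation in the proof of Proposition~\ref{prop:uniqueness_equal_rho} already shows that $g_i' = -N\beta\rho_0\, g_i$ when $\rho_+=\rho_-$. Hence $g_i(t)=g_i(0)\expo^{-N\beta\rho_0 t}\to 0$ exponentially fast. It therefore suffices to prove $h_i(t)\to 2\rho$ for every $i$, since then $f_{i,\pm}=(h_i\pm g_i)/2\to\rho$.

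To get the equation for $h_i$, add the two equations in \eqref{eq:ODE_system} and use $f_{j,+}f_{j,-}=(h_j^2-g_j^2)/4$. Put $\varphi(x)=\beta x+\tfrac{\lambda}{2}x^2$; the diagonal terms $j=i$ cancel. This gives
\[
h_i' = \sum_{j=1}^N\Big[(1-h_i)\,\varphi(h_j)-\varphi(h_i)\,(1-h_j)\Big] + E_i(t),
\]
where the error $E_i$ collects the $g$-terms. Since all $f$'s lie in $[0,1]$, we have $|E_i(t)|\le C\sum_j g_j^2 \le C'\expo^{-2N\beta\rho_0 t}$. Introduce $R(x)=\varphi(x)/(1-x)$, which is strictly increasing on $[0,1)$ and blows up at $x=1$. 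The principal part can then be rewritten in consensus form:
\[
h_i' = \sum_j (1-h_i)(1-h_j)\big(R(h_j)-R(h_i)\big) + E_i .
\]

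The key step is a Lyapunov functional $V(t)=\sum_i \Psi(h_i(t))$ with $\Psi'=R$, that is, $\Psi(x)=\int_0^x R$. Symmetrizing the double sum gives
\[
V' = -\tfrac12\sum_{i,j}(1-h_i)(1-h_j)\big(R(h_i)-R(h_j)\big)^2 + \sum_i R(h_i)E_i =: -D(t) + \sum_i R(h_i)E_i .
\]
To control the last term, I first need an a priori bound keeping every $h_i$ uniformly away from $1$, so that $R(h_i)$ stays bounded. The argument I would try is a maximum-principle one. Conservation of empties, $\sum_j(1-h_j)=N\rho_0$, forces some $j$ with $1-h_j\ge\rho_0$. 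If $h_i$ is close to $1$, then $R(h_i)$ is much larger than $R(h_j)$ for such a $j$. The consensus term then contributes at least $c\,(1-h_i)$ times a large positive quantity with a negative sign to $h_i'$. This should dominate $|E_i|$, and $E_i$ itself vanishes when $f_{i,0}=0$, since it carries the factor $f_{i,0}$ or the factor $\sum_{j\neq i}f_{j,0}$ multiplying $g$-terms. From this I expect $\max_i h_i(t)\le 1-\delta$ for $t$ large.

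With that bound, $V$ is bounded and $V'\le -D+C\expo^{-ct}$. Integrating gives $\int_0^\infty D\,\dd t<\infty$. Moreover, $D$ is uniformly Lipschitz in $t$ because $h'$ is bounded, so $D(t)\to 0$. Since the weights $(1-h_i)(1-h_j)$ are at least $\delta^2$, $D\to0$ forces $\max_{i,j}|R(h_i)-R(h_j)|\to 0$. By injectivity and uniform continuity of $R^{-1}$ on $[0,R(1-\delta)]$, this gives $\max_{i,j}|h_i-h_j|\to0$. Combined with $\sum_i h_i=2N\rho$ from Proposition~\ref{prop:conserved}, this yields $h_i\to2\rho$.

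I expect the main obstacle to be the uniform bound keeping $h_i$ away from $1$, together with the careful handling of the exponentially decaying perturbation $E_i$. Near a fully occupied neighborhood the consensus weights degenerate. If the maximum-principle argument proves awkward, my fallback is the theory of asymptotically autonomous systems (Markus/Thieme): show that the $\omega$-limit set is invariant for the limit system and apply LaSalle with $V$ there. By the same injectivity-of-$R$ argument as in Proposition~\ref{prop:uniqueness_equal_rho}, the only invariant points are the homogeneous one.
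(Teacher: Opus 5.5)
Your argument is correct. After the common first step ($g_i(t)=g_i(0)\expo^{-N\beta\rho_0 t}$), it takes a different route from the paper.

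The paper works directly with the explicit equation
\[
h_i' = N\beta(2\rho-h_i) + \tfrac{\lambda}{2}(1-h_i)\sum_{k}(h_k^2-g_k^2) - \tfrac{\lambda N\rho_0}{2}(h_i^2-g_i^2).
\]
Here the coupling enters only through the aggregate $\sum_k(h_k^2-g_k^2)=4\sum_k f_{k,+}f_{k,-}\ge 0$, which is the same for every $i$. Subtracting two such equations, $d_{ij}=h_i-h_j$ solves a scalar linear equation $d_{ij}'=-c_{ij}(t)\,d_{ij}+\tfrac{\lambda N\rho_0}{2}(g_i^2-g_j^2)$. The damping satisfies $c_{ij}(t)\ge N\beta$ and the forcing decays exponentially, so $d_{ij}\to 0$ exponentially, and conservation of $\sum_i h_i$ finishes the proof. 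That route needs no bound keeping $h_i$ away from $1$, no Lyapunov functional and no Barbalat argument, and it yields an explicit rate.

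Your consensus form takes $R$ equal, under $h=2a$, to twice the map of Proposition~\ref{prop:uniqueness_equal_rho}. It exhibits a genuine Lyapunov functional $\sum_i\Psi(h_i)$ and shows that the same injectivity that forces uniqueness of the equilibrium also drives convergence. The price is that the weights $(1-h_i)(1-h_j)$ degenerate and $R$ blows up at full occupancy. Hence you need the extra a priori bound, and you only get a qualitative limit.

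One claim in your sketch of that bound is false: $E_i$ does not vanish when $f_{i,0}=0$. From your own decomposition,
\[
E_i=\tfrac{\lambda}{2}\big(N\rho_0\,g_i^2-(1-h_i)\sum_j g_j^2\big),
\]
so at $h_i=1$ you get $E_i=\tfrac{\lambda}{2}N\rho_0\,g_i^2$. This is positive for a full, unmixed neighborhood; the factor $\sum_{j\ne i}f_{j,0}$ equals $N\rho_0$ there rather than vanishing. The error is harmless, because you only need the bound for large $t$. There $|E_i|$ is exponentially small, and your estimate at the maximizing index goes through.

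The bound is also available more cheaply from the display above. Drop the term $-2\lambda N\rho_0 f_{i,+}f_{i,-}\le 0$ and use $4f_{k,+}f_{k,-}\le 1$. This gives, for all $t$,
\[
h_i'\le N\beta(2\rho-h_i)+\tfrac{\lambda N}{2}(1-h_i),
\]
whose right-hand side vanishes at $\frac{2\beta\rho+\lambda/2}{\beta+\lambda/2}<1$. By comparison, $\limsup_{t\to\infty}\max_i h_i(t)<1$.
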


\begin{proof}
For each $i \in \{1,\ldots, N\}$, we recall that $g_i \coloneqq f_{i,+} - f_{i,-}$ and $h_i \coloneqq f_{i,+} + f_{i,-}$. Thanks to the computations carried out in the proof of Proposition \ref{prop:uniqueness_equal_rho}, $g'_i = -N\,\beta\,\rho_0\,g_i$ and thus $g_i(t) = g_i(0)\,\expo^{-N\,\beta\,\rho_0\,t}$ for all $t\geq 0$. Now, a straightforward calculation also yields that
\begin{equation*}
\begin{aligned}
h'_i &= (1-h_i)\,\left(\beta\,\sum_{j=1}^N h_j + 2\,\lambda\,\sum_{j=1}^N f_{j,+}\,f_{j,-}\right) - N\,\rho_0\,\left(\beta\,h_i + 2\,\lambda\,f_{i,+}\,f_{i,-}\right) \\
&= N\,\beta\,(2\,\rho - h_i) + 2\,\lambda\,(1-h_i)\,\sum_{j=1}^N \frac{h^2_j - g^2_j}{4} - 2\,\lambda\,N\,\rho_0\,\frac{h^2_i - g^2_i}{4}.
\end{aligned}
\end{equation*}
Consequently, for $1\leq i,j \leq N$ we obtain
\begin{equation*}
\begin{aligned}
h'_i-h'_j &= -N\,\beta\,(h_i-h_j) - \frac{\lambda}{2}\,(h_i-h_j)\,\sum_{k=1}^N h^2_k - \frac{\lambda\,N\,\rho_0}{2}\,(h^2_i-h^2_j) \\
&\qquad + \frac{\lambda}{2}\,(h_i-h_j)\,\sum_{k=1}^N g^2_k + \frac{\lambda\,N\,\rho_0}{2}\,(g^2_i-g^2_j) \\
&= -(h_i - h_j)\left(N\,\beta + \frac{\lambda}{2}\,\sum_{k=1}^N (h^2_k - g^2_k)+ \frac{\lambda\,N\,\rho_0}{2}\,(h_i + h_j)\right) + \frac{\lambda\,N\,\rho_0}{2}\,(g^2_i-g^2_j) \\
&\leq -N\,\beta\,(h_i-h_j) + \frac{\lambda\,N\,\rho_0}{2}\,(g^2_i-g^2_j).
\end{aligned}
\end{equation*}
Inserting the explicit formulas for $g_i$ and $g_j$, we deduce from the aforementioned differential inequality that
\begin{equation*}
h_i(t) - h_j(t) \leq (h_i(0) - h_j(0))\,\expo^{-N\,\beta\,t} + \frac{(1-2\,\rho)\,\lambda\,(g^2_i(0)-g^2_j(0))}{2\,\beta\,(4\,\rho-1)}\left(\expo^{-(2-4\,\rho)\,N\,\beta\,t} - \expo^{-N\,\beta\,t}\right)
\end{equation*}
for $\rho \neq 1/4$, and
\begin{equation*}
h_i(t) - h_j(t) \leq \left(h_i(0) - h_j(0) + \frac{(1-2\,\rho)\,\lambda\,N\,(g^2_i(0)-g^2_j(0))}{2}\right)\expo^{-N\,\beta\,t}
\end{equation*}
for $\rho = 1/4$. Therefore, we obtain $\lim_{t \to \infty} (h_i(t) - h_j(t)) = 0$ for each $1\leq i,j \leq N$. As $\sum_{j=1}^N h_j(t) \equiv 2\,N\,\rho$ is conserved for all times, for each $1\leq i\leq N$ we have
\[\lim_{t \to \infty} (h_i(t) - 2\,\rho) = \lim_{t \to \infty} \left(h_i(t) - \frac{1}{N}\sum_{j=1}^N h_j(t)\right) = \frac{1}{N}\sum_{j=1}^N \lim_{t \to \infty} (h_i(t) - h_j(t)) = 0,\]
thus $\lim_{t\to\infty} h_i(t) = 2\rho$. Finally, we conclude that
\[\lim_{t \to \infty} f_{i,+}(t) = \lim_{t \to \infty} \frac{h_i(t) + g_i(t)}{2} = \rho \quad \textrm{and} \quad \lim_{t \to \infty} f_{i,+}(t) = \lim_{t \to \infty} \frac{h_i(t) - g_i(t)}{2} = \rho.\] This completes the proof.
\end{proof}

\subsection{Stability of the homogeneous equilibrium}\label{subsec:3.4}

For the asymmetric case where $\rho_+ \neq \rho_{-}$ and $\beta > 0$, a rigorous proof of uniqueness for the equilibrium solution to \eqref{eq:ODE_system} remains elusive. Nevertheless, numerical experiments presented in section \ref{subsec:3.5} suggest that the homogeneous states $\{f^*_{i,\pm}\}$ remains the unique global attractor whenever $\beta > 0$.

The primary objective of this subsection is to establish the stability and asymptotic stability of the homogeneous equilibrium $\{f^*_{i,\pm}\}$.

\begin{proposition}\label{prop:stability}
Under the settings of Proposition \ref{prop:conserved}, we have the following:
\begin{enumerate}[label=(\roman*)]
\item if $\beta = 0$ and $\lambda > 0$, then $\{f^*_{i,\pm}\}$ is Lyapunov stable;
\item if $\beta > 0$, then $\{f^*_{i,\pm}\}$ is asymptotically stable.
\end{enumerate}
\end{proposition}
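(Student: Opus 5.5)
The plan is to linearize \eqref{eq:ODE_system} at $\{f^*_{i,\pm}\}$ and exploit a simplification that decouples the neighborhoods. Write $\varphi_\pm(i) \coloneqq (\beta+\lambda f_{i,\mp})\,f_{i,\pm}$ and $S_\pm \coloneqq \sum_{j=1}^N \varphi_\pm(j)$. Using $\sum_{j\neq i} f_{j,0} = N\rho_0 - f_{i,0}$ (from Proposition \ref{prop:conserved}), the two $f_{i,0}\varphi_\pm(i)$ terms cancel, exactly as in \eqref{eq:ODE_beta=0}. This leaves
\[
f'_{i,\pm} = f_{i,0}\,S_\pm - N\rho_0\,\varphi_\pm(i).
\]
Set $u_i = f_{i,+}-\rho_+$ and $v_i = f_{i,-}-\rho_-$, with $\sum_i u_i = \sum_i v_i = 0$ on the invariant affine manifold fixed by the conservation laws. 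The first-order variation of $S_\pm$ is a linear combination of $\sum_j u_j$ and $\sum_j v_j$, so it vanishes there. Hence the linearization splits into $N$ identical $2\times 2$ blocks $(u_i,v_i)' = -N M (u_i,v_i)^\intercal$. With $a = \beta+\lambda\rho_-$ and $b = \beta+\lambda\rho_+$,
\[
M = \begin{pmatrix} (\rho_++\rho_0)\,a & \rho_+ a + \rho_0\lambda\rho_+ \\ \rho_- b + \rho_0 \lambda \rho_- & (\rho_-+\rho_0)\,b\end{pmatrix}.
\]

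Next I compute the trace and determinant of $M$. The trace is clearly positive when $\lambda>0$ or $\beta>0$. For the determinant, expanding and using $\rho_++\rho_-+\rho_0 = 1$, the $\lambda^2$ terms should cancel, giving
\[
\det M = \rho_0\left[\beta^2 + \beta\lambda\,(\rho_+ + \rho_- - 2\rho_+\rho_-)\right],
\]
and $\rho_++\rho_--2\rho_+\rho_- = \rho_+(1-\rho_-)+\rho_-(1-\rho_+) > 0$.

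For (ii), $\beta>0$ gives $\det M>0$ and $\operatorname{tr} M>0$, so both eigenvalues of $-NM$ have negative real part. The full linearization is Hurwitz, and in particular so is its restriction to the invariant tangent subspace $\{\sum u_i = \sum v_i = 0\}$. The Hartman--Grobman / Lyapunov linearization theorem then gives local exponential, hence asymptotic, stability.

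For (i), $\beta = 0$ makes $\det M = 0$, so linearization is inconclusive. This is consistent with the continuum of rigid-mixing equilibria in Theorem \ref{thm:rigid_mixing}. Here I use the exact invariants $g_i = f_{i,+}-f_{i,-}$, constant in time by \eqref{eq:ODE_beta=0}, and study the reduced system for $h_i = f_{i,+}+f_{i,-}$:
\[
h_i' = 2\lambda\Bigl(f_{i,0}\textstyle\sum_j \tfrac{h_j^2-g_j^2}{4} - N\rho_0\,\tfrac{h_i^2-g_i^2}{4}\Bigr),
\]
with $\sum_i h_i = N(\rho_++\rho_-)$ and $g = (g_i)_i$ treated as a fixed parameter. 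At $g = g^*\coloneqq(\rho_+-\rho_-)_i$ the reduced linearization on $\{\sum \delta h_i = 0\}$ is multiplication by the nonzero eigenvalue $-N\operatorname{tr}M<0$. Its equilibrium is the rigid-mixing configuration of Theorem \ref{thm:rigid_mixing}, and $\alpha$, hence the equilibrium, depends continuously (indeed smoothly, by the implicit function theorem applied to $Q$) on $g$.

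The main obstacle is making the stability uniform in the parameter $g$. I would take a quadratic Lyapunov function $V_g(h) = |h - \hat h(g)|^2$, where $\hat h(g)$ is the rigid-mixing equilibrium. At $g = g^*$ the derivative satisfies $\dot V \le -c V$ on a ball of radius $r$. By continuity of the vector field and its Jacobian in $(h,g)$, the same holds with $c/2$ and $r/2$ for all $g$ within some $\delta$ of $g^*$.

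Given $\varepsilon>0$, choose the initial perturbation so small that $|g(0) - g^*| < \delta$, $|\hat h(g(0)) - \hat h(g^*)| < \varepsilon/2$, and $h(0)$ lies in the $\min(r/2,\varepsilon/2)$-ball around $\hat h(g(0))$. Since $g$ is constant in time, $V$ decreases along the trajectory, so $h(t)$ stays within $\varepsilon$ of $\hat h(g^*)$. Combined with $g(t) = g(0)$, this gives Lyapunov stability of $\{f^*_{i,\pm}\}$. It is not asymptotic stability, since trajectories converge to nearby rigid-mixing equilibria instead.
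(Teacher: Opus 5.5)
Your proposal is correct. Part (ii) is the paper's argument. You linearize on the affine manifold fixed by $\rho_\pm$, note that the coupling through $\sum_j u_j,\sum_j v_j$ drops out, and read off the signs of the trace and determinant of the common $2\times 2$ block. Your $\det M=\rho_0[\beta^2+\beta\lambda(\rho_++\rho_--2\rho_+\rho_-)]$ matches the paper's simplified $z_1z_2$. One precision: what is Hurwitz is the block operator on the zero-sum subspace, not the unconstrained Jacobian of \eqref{eq:ODE_system}, which has two zero eigenvalues from the conservation laws.

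Part (i) is where you genuinely differ. The paper only observes $z_1z_2=0$, $z_1+z_2<0$ and declares Lyapunov stability. That inference is not valid in general, because a zero eigenvalue makes linearization inconclusive. You justify it in three steps. First, you identify the zero eigenvalue with the exact invariant $g_i$ (the row $(1,-1)$ annihilates $M$ when $\beta=0$). Second, you show that for frozen $g$ the $h$-dynamics on $\{\sum_i\delta h_i=0\}$ contracts at rate $N\operatorname{tr}M=N\lambda(\rho_0(\rho_++\rho_-)+2\rho_+\rho_-)$; this checks out against the reduced equation. Third, you use a quadratic Lyapunov function centered at the rigid-mixing equilibrium $\hat h(g)$, uniformly for $g$ near $g^*$.

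This is a hands-on normal-hyperbolicity argument. It costs more work than the paper's one-line eigenvalue remark, but it is actually rigorous. It also gives information the paper only sees numerically: nearby trajectories converge exponentially to the rigid-mixing equilibrium selected by $g(0)$ (Figure~\ref{fig:F3}, right), which is exactly why the stability is not asymptotic.

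For continuity of $\hat h(g)$, the cleanest route is the implicit function theorem applied directly to the reduced equilibrium equation. Its $h$-derivative at $g^*$ is $-N\operatorname{tr}M$ times the identity on the zero-sum subspace. Going through $Q$ also works, since $\partial_\alpha Q>0$ when all $|\Delta_i|<1/2$. The theorem you need in (ii) is Lyapunov's indirect method rather than Hartman--Grobman, though nothing changes.
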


\begin{proof}
We perform a standard linearization around the homogeneous equilibrium $\{f^*_{i,\pm}\}$. By setting $f_{i,\pm}(t) = \rho_{i,\pm} + \varepsilon\,r_{i,\pm}(t)$ with $|\varepsilon| \ll 1$, we obtain the following linearized system:
\begin{equation}
r'_{i,\pm} = -N\,(\beta + \lambda\,\rho_{\mp})\,(\rho_0 + \rho_{\pm})\,r_{i,\pm} + N\,\left(-\lambda\,\rho_0\,\rho_{\pm} - (\beta + \lambda\,\rho_{\mp})\,\rho_{\pm}\right)r_{i,\mp}
\end{equation}
for each $1\leq i\leq N$. Therefore, for each $1\leq i\leq N$, the pair $(r_{i,+}, r_{i,-})$ satisfies a two dimensional linear ODE system given by \begin{equation}\label{eq:linearization}
\frac{1}{N}\,\frac{\dd}{\dd t}\begin{pmatrix}
r_{i,+} \\ r_{i,-}
\end{pmatrix} = \begin{pmatrix}
-(\beta + \lambda\,\rho_{-})\,(\rho_0 + \rho_+) & -\lambda\,\rho_0\,\rho_+ - (\beta + \lambda\,\rho_{-})\,\rho_+  \\
-\lambda\,\rho_0\,\rho_{-} - (\beta + \lambda\,\rho_+)\,\rho_{-} & -(\beta + \lambda\,\rho_+)\,(\rho_0 + \rho_{-})
\end{pmatrix}\begin{pmatrix}
r_{i,+} \\ r_{i,-}
\end{pmatrix}
\end{equation}
The characteristic equation of the matrix in \eqref{eq:linearization} reads as
\begin{equation}\label{eq:character}
\begin{aligned}
0 &= z^2 + \left((1-\rho_{-})(\beta + \lambda\,\rho_{-}) + (1-\rho_+)(\beta + \lambda\,\rho_+)\right)z \\
&\quad + (1-\rho_{-})(1-\rho_+)(\beta + \lambda\,\rho_{-})(\beta + \lambda\,\rho_+) - (\beta + \lambda\,(1-\rho_+))\,\rho_+ - (\beta + \lambda\,(1-\rho_{-}))\,\rho_{-}
\end{aligned}
\end{equation}
Denoting the roots of \eqref{eq:character} as $z_1$ and $z_2$, it follows that
\[z_1 + z_2 = -\left((1-\rho_{-})(\beta + \lambda\,\rho_{-}) + (1-\rho_+)(\beta + \lambda\,\rho_+)\right) < 0.\]
If $\beta = 0$ and $\lambda > 0$, we obtain $z_1\,z_2 = 0$, whence $\{f^*_{i,\pm}\}$ is stable (in the sense of Lyapunov). When $\beta > 0$, we have
\begin{align*}
z_1\,z_2 &= (1-\rho_{-})(1-\rho_+)(\beta + \lambda\,\rho_{-})(\beta + \lambda\,\rho_+) - (\beta + \lambda\,(1-\rho_+))\,\rho_+ - (\beta + \lambda\,(1-\rho_{-}))\,\rho_{-} \\
&= (1-\rho_{-})(1-\rho_+)\left(\beta^2 + \beta\,\lambda\,(\rho_+ + \rho_{-})\right) - \rho_+\,\rho_{-}\left(\beta^2 + \beta\,\lambda\,(2 - \rho_+ -\rho_{-})\right) \\
&= \beta^2\,\rho_0 + \beta\,\lambda\,\left(\rho_0\,(1-\rho_0) - 2\,\rho_0\,\rho_+\,\rho_{-}\right) \\
&\geq \beta^2\,\rho_0 + \beta\,\lambda\,\frac{\rho_0\,(1-\rho_0)}{2} > 0.
\end{align*}
Consequently, we deduce that $\mathrm{Re}(z_1) < 0$ and $\mathrm{Re}(z_2) < 0$, hence $\{f^*_{i,\pm}\}$ is asymptotically stable.
\end{proof}

\subsection{Numerical experiments}\label{subsec:3.5}


We numerically investigate the temporal evolution of the solution $\{f_{i,\pm}(t)\}$ to the mean-field system \eqref{eq:ODE_system} for $N = 3$ as it converges toward equilibrium. Three distinct parameter regimes are explored: $\beta = 1, \lambda = 0$ (Figure \ref{fig:F3}-left); $\beta = 0, \lambda = 1$ (Figure \ref{fig:F3}-right); and $\beta = \lambda = 1$ (Figure \ref{fig:F4}). For the results shown in Figure \ref{fig:F3} and Figure \ref{fig:F4}-left, the system is initialized with $\mathbf{p}(0) = (0.55, 0.3, 0.2, 0.1, 0.15, 0.2)^\intercal$. In contrast, for Figure \ref{fig:F4}-right, we set $\mathbf{p}(0) = (0.4, 0.15, 0.3, 0.35, 0.2, 0.1)^\intercal$. Note that both initializations correspond to the population densities $\rho_+ = 0.3$ and $\rho_{-} = 0.2$. Numerical integration is performed using a standard fourth-order Runge-Kutta scheme with a constant time step $\Delta t = 0.01$.

\begin{figure}[!t]
  \begin{subfigure}{0.47\textwidth}
    \centering
    \includegraphics[width=\textwidth]{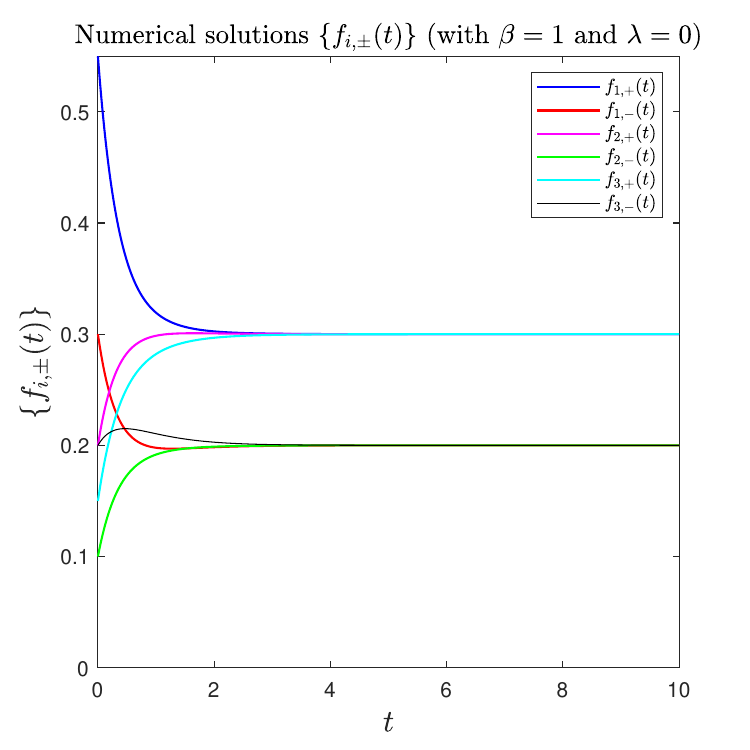}
  \end{subfigure}
  \hspace{0.1in}
  \begin{subfigure}{0.47\textwidth}
    \centering
    \includegraphics[width=\textwidth]{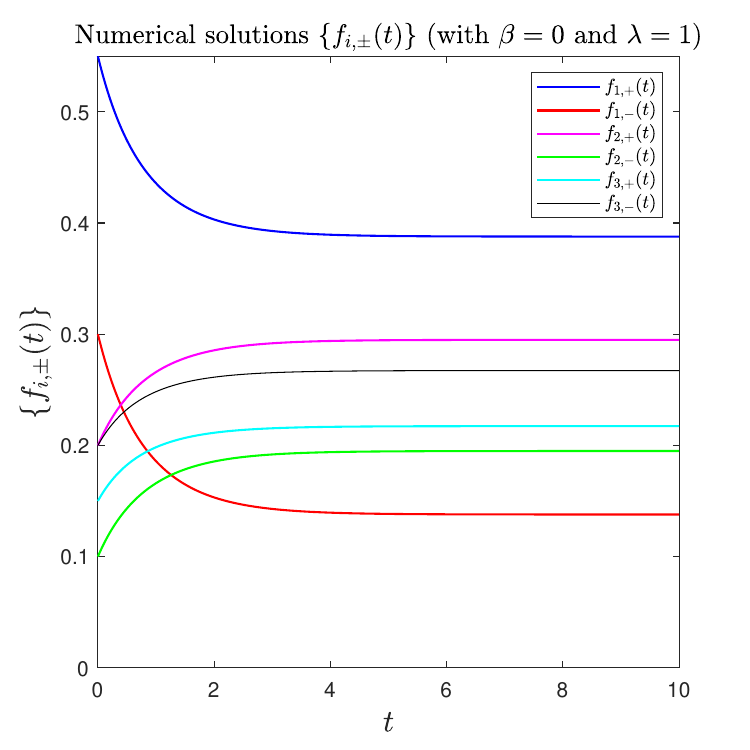}
  \end{subfigure}
  \caption{{\bf Left}: Evolution of the solution $\{f_{i,\pm}(t)\}$ to the mean-field ODE system \eqref{eq:ODE_system} for $N = 3$ with $\beta = 1$ and $\lambda = 0$. {\bf Right}: Evolution of the solution $\{f_{i,\pm}(t)\}$ to the mean-field ODE system \eqref{eq:ODE_system} for $N = 3$ with $\beta = 0$ and $\lambda = 1$.}
  \label{fig:F3}  
\end{figure}

\begin{figure}[!t]
  \begin{subfigure}{0.47\textwidth}
    \centering
    \includegraphics[width=\textwidth]{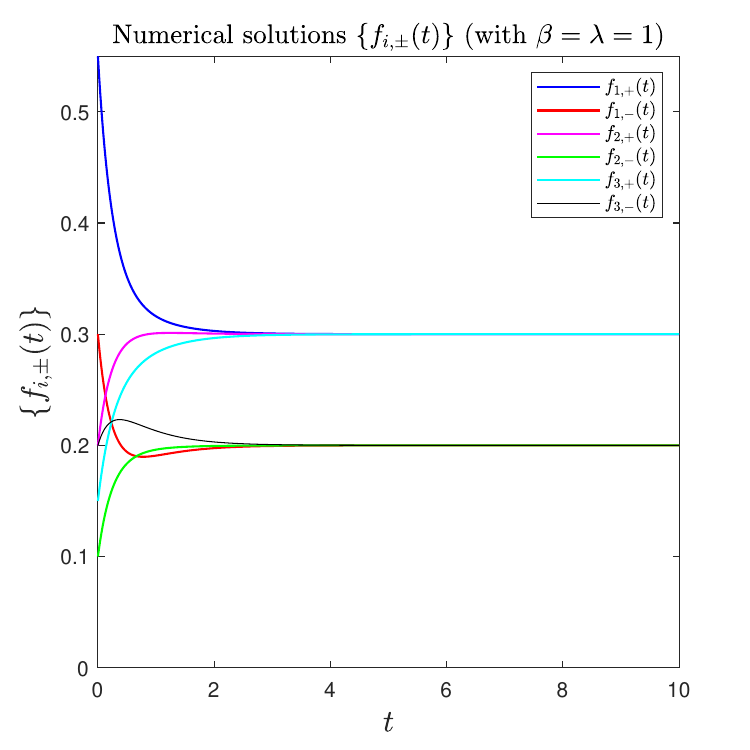}
  \end{subfigure}
  \hspace{0.1in}
  \begin{subfigure}{0.47\textwidth}
    \centering
    \includegraphics[width=\textwidth]{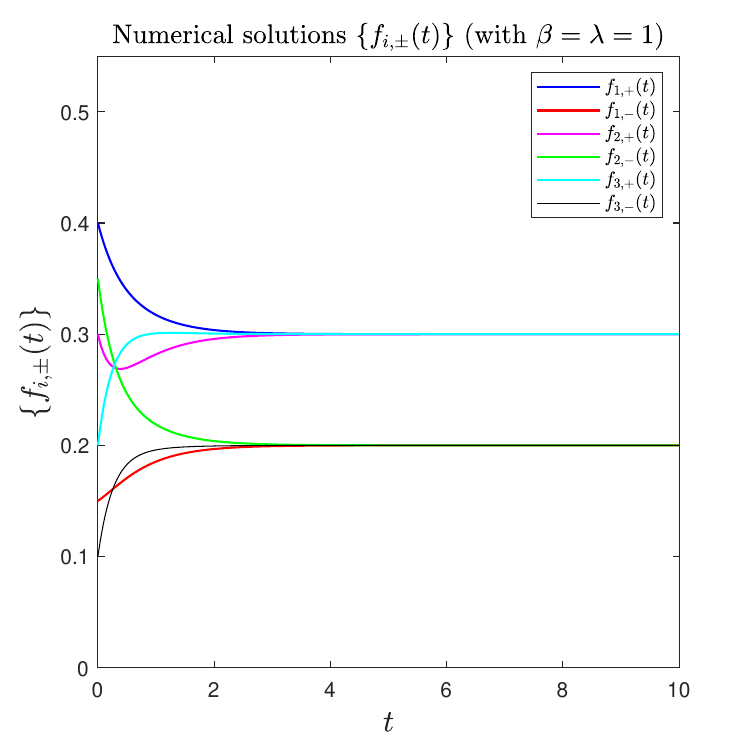}
  \end{subfigure}
  \caption{{\bf Left}: Evolution of the solution $\{f_{i,\pm}(t)\}$ to the mean-field ODE system \eqref{eq:ODE_system} for $N = 3$ with $\beta = \lambda = 1$, initialized with $\mathbf{p}(0) = (0.55, 0.3, 0.2, 0.1, 0.15, 0.2)^\intercal$. {\bf Right}: Evolution of the solution $\{f_{i,\pm}(t)\}$ to the mean-field ODE system \eqref{eq:ODE_system} for $N = 3$ with $\beta = \lambda = 1$, initialized with $\mathbf{p}(0) = (0.4, 0.15, 0.3, 0.35, 0.2, 0.1)^\intercal$.}
  \label{fig:F4}  
\end{figure}

Numerical results presented here indicate that without social interactions ($\lambda = 0$ and $\beta > 0$), the system relaxes to the homogeneous equilibrium $\{f^*_{i,\pm}\}$ (Figure \ref{fig:F3}-left). However, in the purely interactive regime ($\beta = 0$ and $\lambda > 0$), the system exhibits a ``frozen'' state where initial population biases are maintained indefinitely, as it converges to the rigid mixing distribution associated with these initial biases (described by \eqref{eq:rigid_mixing}), as illustrated in Figure \ref{fig:F3}-right. Most notably, Figure \ref{fig:F4} demonstrates that the introduction of any degree of diffusion ($\beta > 0$) restores the global attractivity of the homogeneous state, independent of the initialization or the presence of interaction terms.

\section{Conclusion}\label{sec:4}
\setcounter{equation}{0}

In this manuscript, we have introduced and investigated a Schelling-type metapopulation model that diverges from classical frameworks by prioritizing the mechanics of departure over the strategy of arrival. While our work is partially inspired by the metapopulation models of Durrett and Zhang \cite{durrett_2014_exact}, the primary novelty of our approach lies in the coupling of local environment-driven departures with a destination-agnostic relocation mechanism. By shifting the focus from utility-based optimization, where agents seek to maximize happiness at a specific destination, to a stochastic process where agents simply exit ``unstable'' environments, we demonstrate that spatial patterns can emerge even in the absence of intentional destination-seeking behavior.

Our analysis centered on two distinct mean-field regimes where the dynamical complexity differs significantly, and our analysis is focused on the more trackable mean-field regime where the number of houses $L$ (per neighborhood) tends to infinity while keeping the number of neighborhoods $N$ fixed. In the diffusive regime ($\beta > 0$), we established that the introduction of even a small degree of random movement acts as a powerful regularizing force. In the symmetric case ($\rho_+ = \rho_-$), we rigorously proved the uniqueness and global asymptotic stability of the homogeneous equilibrium. Conversely, the degenerate regime ($\beta = 0$) highlights a critical sociophysical insight: without a base-line level of mobility, initial social/population biases can become permanently locked into the spatial structure of a metapopulation.

The results presented here suggest that the ``push'' factor of environmental instability is a sufficient driver for collective dynamics, even when the ``pull'' factor of destination preference is absent. Moving forward, the study of finite-size effects and stochastic fluctuations in the underlying multi-agent system remains a promising avenue for future research. Ultimately, this work reinforces how simple local rules, even those lacking explicit utility/relocation optimization mechanism, can dictate the global evolution of complex social systems. \\

\noindent {\bf Acknowledgement~} We express our gratitude to Sebastien Motsch for generating Figure \ref{fig:illustration_model} that illustrates the Schelling-type metapopulation model proposed and studied in this manuscript. Fei Cao gratefully acknowledges support from an AMS-Simons Travel Grant, administered by the American Mathematical Society with funding from the Simons Foundation. Roberto Cortez acknowledges partial support from Fondecyt Grant 1242001.

\end{document}